\newcommand\algorithmicprocedure{\textbf{procedure}}
\newcommand{\algorithmicendprocedure}{\algorithmicend\ \algorithmicprocedure}
\newcommand\PROCEDURE[3][default]{%
  \ALC@it
  \algorithmicprocedure\ \textsc{#2}(#3)%
  \ALC@com{#1}%
  \begin{ALC@prc}%
}
\newcommand\ENDPROCEDURE{%
  \end{ALC@prc}%
  \ifthenelse{\boolean{ALC@noend}}{}{%
    \ALC@it\algorithmicendprocedure
  }%
}
\newenvironment{ALC@prc}{\begin{ALC@g}}{\end{ALC@g}}
\DeclareMathOperator{\gram}{\bf gram}
\DeclareMathOperator{\Tr}{\bf Tr}
\newtheorem{defi}{Definition}
\newtheorem{theorem}{Theorem}
\newtheorem{rem}{Remark}
\begin{document}
\title{SIG-SDP: Sparse Interference Graph-Aided Semidefinite Programming for Large-Scale Wireless Time-Sensitive Networking}

\author{
\IEEEauthorblockN{Zhouyou Gu, Jihong Park, Branka Vucetic, Jinho Choi}\\
\thanks{
This work was supported in part by A$^*$STAR under its IAF-ICP (I2501E0064), in part by the IITP-ITRC grant funded by the Korean government (MSIT) (IITP-2026-RS-2023-00259991) (33\%), in part by SUTD Kickstarter Initiative (SKI 2021 06 08), and in part by the National Research Foundation, Singapore, and the Infocomm Media Development Authority under its Future Communications Research \& Development Programme. The work was supported in part by the Australian Research Council Laureate Fellowship grant number FL160100032 and Discovery grant number DP210103410.
\emph{(Corresponding authors: J. Park and Z. Gu.)}}
\thanks{
Z. Gu and J. Park are with the Information Systems Technology and Design Pillar, Singapore University of Technology and Design, Singapore 487372 (email: \{zhouyou\_gu, jihong\_park\}@sutd.edu.sg).
}
\thanks{
B. Vucetic is with the School of Electrical and Computer Engineering, the University of Sydney, Sydney, NSW 2006, Australia 
(email: \{branka.vucetic\}@sydney.edu.au).
}
\thanks{
J. Choi is with the School of Electrical and Mechanical Engineering,
the University of Adelaide, Adelaide, SA 5005, Australia
(email: \{jinho.choi\}@adelaide.edu.au).
}
\thanks{Source codes are available at {https://github.com/zhouyou-gu/sig-sdp-mmw}.}
\vspace{-0.75cm}
}
\maketitle

\begin{abstract}
Wireless time-sensitive networking (WTSN) is essential for Industrial Internet of Things. We address the problem of minimizing time slots needed for WTSN transmissions while ensuring reliability subject to interference constraints---an NP-hard task. Existing semidefinite programming (SDP) methods can relax and solve the problem but suffer from high polynomial complexity. We propose a sparse interference graph-aided SDP (SIG-SDP) framework that exploits the interference's sparsity arising from attenuated signals between distant user pairs. First, the framework utilizes the sparsity to establish the upper and lower bounds of the minimum number of slots and uses binary search to locate the minimum within the bounds. Here, for each searched slot number, the framework optimizes a positive semidefinite (PSD) matrix indicating how likely user pairs share the same slot, and the constraint feasibility with the optimized PSD matrix further refines the slot search range. Second, the framework designs a matrix multiplicative weights (MMW) algorithm that accelerates the optimization, achieved by only sparsely adjusting interfering user pairs' elements in the PSD matrix while skipping the non-interfering pairs. We also design an online architecture to deploy the framework to adjust slot assignments based on real-time interference measurements. Simulations show that the SIG-SDP framework converges in near-linear complexity and is highly scalable to large networks. The framework minimizes the number of slots with up to 10 times faster computation and up to 100 times lower packet loss rates than compared methods. The online architecture demonstrates how the algorithm complexity impacts dynamic networks' performance.

\end{abstract}

\begin{IEEEkeywords} 
Interference graphs, semidefinite programming, graph sparsity.
\end{IEEEkeywords}

\section{Introduction}
Emerging Industrial Internet of Things (IIoT) applications, including autonomous vehicles, factory automation and tactile internet, have driven the evolution of wireless networks \cite{sisinni2018industrial,vitturi2019industrial,hazra2021comprehensive}.
In typical IIoT scenarios, as illustrated in Fig. \ref{fig:WTSN_factory}, multiple users (e.g., sensors) periodically transmit status updates to base stations (BSs) via the uplink of the wireless network \cite{nasrallah2018ultra}. These status updates follow deterministic packet arrival processes and demand stringent quality-of-service (QoS) requirements, including low latency and high reliability. Delayed or lost sensor updates can lead to inaccurate control decisions, potentially resulting in accidents \cite{3gpp.22.104}.
For example, the missing sensor update on the chemical process temperature and pressure can cause the process to run out of control, leading to explosions and other catastrophic events \cite{zhao2024deep}; Similarly, the delayed/lost measurements on automated guided vehicles' surrounding environment can cause the vehicles to run into obstacles, leading to collisions and damages \cite{bostelman2014methods}.
Wireless network technologies that support such traffic are referred to as wireless time-sensitive networking (WTSN) \cite{zanbouri2024comprehensive}.

\begin{figure}[!t]
\centering
\includegraphics[scale=0.45]{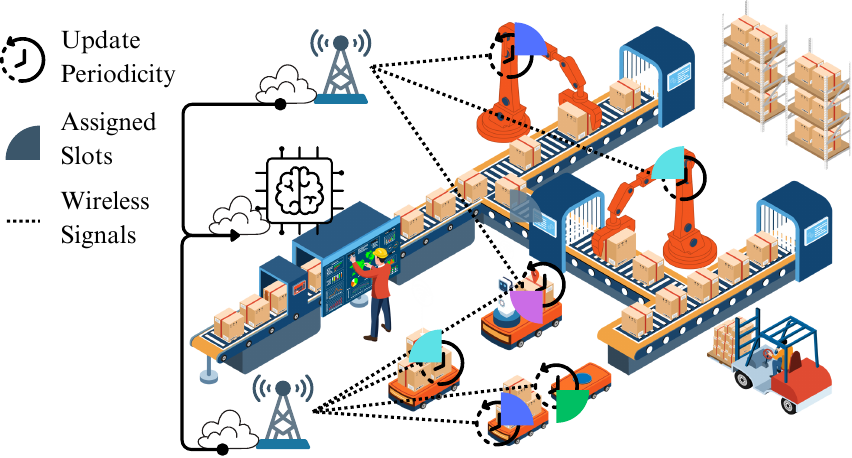}
\vspace{-0.1cm}
\caption{Illustration of a WTSN system for IIoT applications.}
\label{fig:WTSN_factory}
\vspace{-0.5cm}
\end{figure}

Existing WTSN approaches assign a separate time slot to each transmission to ensure transmission reliability \cite{khoshnevisan20195g,seijo2020w,sudhakaran2022wireless,candell2023scheduling,gu2021knowledge,yang2023detfed,zhou2024predictable}. 
Clearly, this slot assignment scheme will cause significant delays in large-scale networks because each user needs to wait for many other users' slots to end before transmitting.
To reduce the delay to the next available slot, multiple users can share the same time slot. However, slot sharing inevitably causes inter-user interference in their transmissions, which decreases signal-to-interference-plus-noise ratios (SINR) and further increases decoding failures at receiving BSs, dramatically degrading the reliability of WTSN. Thus, the challenge arises to find a slot assignment where concurrent user transmissions can be reliably received while minimizing the number of slots that users need to wait in WTSN.

Given that the above slot assignments in WTSN are discrete integers, the task of finding the optimal decisions can be formulated as an integer programming problem \cite{chang2009multicell,liang2018graph}.
Though the problem is NP-hard, it can be relaxed as the one solvable within polynomial time complexity (w.r.t. the number of users) by approximating the integer decisions as continuous numbers \cite{subramanian2008minimum,gu2024graph}. For example, the integer decisions can be relaxed to a positive semidefinite (PSD) matrix whose values represent how likely each pair of users will be assigned to the same slot (i.e., how likely interference occurs between them). This reformulates the interference management task as a semidefinite programming (SDP) problem whose optimal solution (the optimal PSD matrix) can be rounded back to near-optimal assignments \cite{gu2024graph,goemans1995improved}.
However, solving SDP remains time-consuming, especially with a large number of users in the network, leading to delayed solutions \cite{shi2015largeA}.
As a result, the solvers ignore users' increasing interference during the solving time, e.g., when users move close to other users, causing low SINRs and decoding failures.
Therefore, reducing the time spent on solving the interference management problem is crucial for ensuring the high reliability in WTSN.
\subsection{Related Works}\label{subsec:related_works}

\subsubsection{Interference-Free Time-Slot Assignment Methods}
Authors in \cite{khoshnevisan20195g,seijo2020w,sudhakaran2022wireless,candell2023scheduling} implement prototypes of WTSN and manually assign separate periodical slots to each user's transmissions when experimenting with their prototypes. 
To automate the slot assignments, machine learning algorithms are used \cite{gu2021knowledge,yang2023detfed,zhou2024predictable} to allocate separate slots for each user's transmissions. 
None of these works study how to manage interference in WTSN when multiple users inevitably share the same slot as the number of users increases. 
Works in \cite{tassiulas1990stability,joo2009understanding,birand2011analyzing,tsanikidis2021power} formulate the interference in the network as an undirected binary conflict graph, in which users are vertices and an edge connects two users if they are interfering with each other and cannot transmit together.
Users scheduled in each slot are heuristically selected as the independent set of the graph, i.e., any two users in the same slot are not connected in the graph \cite{tassiulas1990stability}. 
However, the binary conflict graph does not model the accumulative interference power from neighboring users.
Note that interfering users can simultaneously transmit provided that their SINRs are high enough for successful decoding, which has better transmission efficiency than separating all interfering users.

\subsubsection{Interference-Tolerant Time-Slot Assignment Methods}
Works in \cite{chang2009multicell,liang2018graph} use weighted graphs to model interference between user pairs in the network, where undirected edges \cite{chang2009multicell} assume the interference equally impacts both users, while directed edges \cite{liang2018graph} differentiate the interference made by two users.
Then, the graph cut divides users into a given number of slots by maximizing the sum of edge weights disconnected between slots. To cut the graph efficiently, the linear and the SDP relaxation can relax the integer slot assignment decisions into continuous variables \cite{subramanian2008minimum,gu2024graph}. 
Specifically, the linear one transforms whether each user is assigned to an available slot as a probability, assuming that users' decisions are independent. Meanwhile, the SDP one transforms whether each pair of users is assigned to the same slot as the correlation between these two users' decisions (i.e., as the PSD matrix), better expressing the likelihood that interference occurs.
Therefore, the SDP relaxation returns a closer approximation and better assignments than the linear one \cite{subramanian2008minimum,gu2024opportunistic}.
However, the above relaxation-based methods \cite{subramanian2008minimum,gu2024graph} assume a fixed number of slots and do not apply to minimizing the status update periodicity in WTSN. Further study is needed on relaxation-based slot assignment methods that minimize the number of slots while ensuring reliable transmissions in WTSN.

Additionally, deep learning \cite{eisen2020optimal,shen2020graph} and spectral graph \cite{zha2001spectral,vandam2016new} methods are also widely applied to solve graph problems. However, deep learning methods, e.g., using graph neural networks (GNNs), have limited expressive power in solving combinatorial problems, particularly in direct approximation of interference graph cut/coloring decisions \cite{xu*2018how,loukas2019what,gu2024graph}.
Meanwhile, spectral graph methods approximate the optimal cut/coloring decision by solving the eigenvalue problem of a single graph's adjacency or Laplacian matrix, which can hardly express extra interference constraints.

\subsubsection{SDP Solvers in Wireless Network Optimizations}
Existing wireless network optimizations \cite{shi2015largeA,liu2024survey} commonly use primal-dual interior point (PDIP) \cite{toh1999sdpt3,yamashita2003implementation} or alternating direction methods of multipliers (ADMM) \cite{o2016conic} algorithms for SDP. 
These solvers iteratively update primal and dual optimization variables to satisfy Karush–Kuhn–Tucker (KKT) conditions, certifying optimality \cite{boyd2004convex}. 
Each iteration's update direction is determined by solving a linear system approximating the KKT conditions at iterated variables. Note that this first-order approximation does not guarantee that the iterated primal variables remain as positive semidefinite. Thus, these solvers require a projection of the primal variables onto a PSD matrix in every iteration. 
Here, solving linear systems and projecting PSD matrices involve matrix decomposition, e.g., LU and eigen decomposition, accounting for the major complexity of these methods \cite{toh1999sdpt3,yamashita2003implementation,o2016conic}. 
Unfortunately, their complexity scales in polynomials of the number of constraints and the PSD matrix size \cite{strang2006linear} (i.e., the number of users). As a result, the above solvers cost a significant amount of time when solving the optimization problem of large-scale wireless networks \cite{shi2015largeA}.
Alternative low-complexity SDP solvers for large-scale WTSN optimizations are needed.

\subsection{Our Contributions}
This paper studies how to manage interference in WTSN using SDP methods that assign time slots for users' transmissions. 
We consider a WTSN system where users periodically transmit a packet in one of the slots in each period. 
We define interference constraints enforcing 1) each BS decodes only one of its associated users' transmissions in each slot and 2) that each user's SINR is to be higher than a certain threshold to ensure a low decoding error rate, where associations and SINRs are both determined by user-to-BS path gains.
The objective is to minimize the number of slots in a period while satisfying the above constraints.
Interference graphs are constructed based on path gains, where the binary and weighted edges between users represent coincident user associations and interference powers, respectively. The sparsity of these graphs is analyzed and exploited to accelerate the proposed methods below.
We design an SDP relaxation that represents whether two users share the same slot as a PSD matrix, relaxing the constraints.
The framework uses binary search to find the minimum number of slots, i.e., increasing the slot number if the constraints are feasible or decreasing it otherwise.
For each searched slot number, the framework first optimizes the PSD matrix by solving an SDP constraint satisfaction problem (CSP) with relaxed constraints and then rounds the solution back to integer assignments.
We format the primal and dual of the SDP CSP and implement the matrix multiplicative weights (MMW) algorithm \cite{arora2007combinatorial,steurer2010fast,carmon2019rank} as its solver.
Specifically, the MMW is designed as a two-player zero-sum game: a player uses the hedge rule to adjust dual variables (constraint weights), maximizing the weighted constraint violations; in contrast, the other player uses the matrix exponential of constraint coefficient matrices to approximate the primal variables (the PSD matrix), minimizing the weighted violations.
We show that the duality gap converges in the MMW for the interference management task.
The deployment of the proposed methods in dynamic networks is studied as well.

Our contributions in this work are listed as follows: 
\begin{itemize}
    \item To the best of our knowledge, this work proposes the first sparse interference graph-aided SDP (SIG-SDP) framework that exploits the sparsity of interference graphs to reduce the SDP complexity for interference management.
    The sparsity arises from the limited number of interfering neighbors for users due to unmeasurable, far-distanced signals.
    Specifically, we derive a reduced search range for the minimum number of slots based on the chromatic numbers of the sparse graphs.
    Moreover, this sparsity enables the exclusion of non-interfering user pairs when solving SDP, e.g., using the MMW.
    Simulations demonstrate that the framework exhibits linear computing time w.r.t. the number of users, i.e., they are highly scalable for large networks, and are up to $10$ times faster than the SDP methods that disregard sparsity \cite{toh1999sdpt3,yamashita2003implementation,o2016conic}.
    \item We design the SDP relaxation on the NP-hard problem that minimizes the status update periodicity while satisfying both association and reliability constraints in WTSN. This relaxation differs from existing relaxation methods \cite{subramanian2008minimum,gu2024graph}, which are applicable only to fixed periodicity scenarios. Simulations show that solving the proposed SDP relaxation problem provides slot assignments with up to 10 times fewer packet losses compared to heuristics \cite{tassiulas1990stability} and up to 100 times fewer packet losses compared to the linear relaxation \cite{subramanian2008minimum}.
    \item 
    We implement the MMW algorithm to solve the interference management SDP task, creating a new alternative SDP solver for wireless network optimizations other than the PDIP \cite{toh1999sdpt3,yamashita2003implementation} and ADMM \cite{o2016conic} solvers.
    We derive the convergence of the duality gap in the MMW, w.r.t., the number $C$ of interference constraints, and the number  $K$ of users. Specifically, the convergence error of the gap is $\mathcal{O}{(\eta K)}$ in $\mathcal{O}{(\frac{1}{\eta^2}(\ln C+\ln K)})$ iterations, where $\eta$ is the step size in the MMW. Simulations show that the gap converges at $\sim100$ iterations in the MMW with a properly configured $\eta$, regardless of the network sizes.
    \item
    We design an architecture that deploys the SIG-SDP framework to adjust slot assignments in WTSN based on online-measured path gains. The SIG-SDP framework samples path gains and finds new slot assignments in parallel with the network while users in the network transmit using previous assignments. Once the framework returns new assignments, the users transmit in the newly assigned slots, and the framework restarts the optimization process.
    Simulations show that the proposed framework achieves up to $5$ times lower packet error rates than the low-complexity heuristic for slow-moving users, while the out-performance decreases as the user speed and interference variability increase. This demonstrates the impact of the algorithm complexity on interference management performance in dynamic networks.
\end{itemize}

\subsection{Notation and Paper Organization}
The $i$-th element of a vector, $\mathbf{x}$, is denoted as $x_{i}$.
The $j$-th element of the $i$-th row of a matrix, $\mathbf{X}$, is denoted as $X_{i,j}$.
We write the definition of elements in a matrix $\mathbf{X}$ as $\mathbf{X}\triangleq[X_{i,j}|X_{i,j} = (\cdots)]$, where $(\cdots)$ is the expression defining the elements in $\mathbf{X}$.
We denote a $K\times K$ positive semidefinite matrix $\mathbf{X}$ as $\mathbf{X} \succeq 0$ (without explicit proof, all PSD matrices in this work are symmetric). 
The inner product of two vectors is denoted as $\langle\mathbf{x},\mathbf{y}\rangle$.
The inner product of two matrices are denoted as $\mathbf{X}\bullet\mathbf{Y} = \sum_i \sum_j X_{i,j} Y_{i,j}$. 
We denote the maximum and minimum eigenvalues of a matrix $\mathbf{X}$ as $\lambda_{\max}(\mathbf{X})$ and $\lambda_{\min}(\mathbf{X})$, respectively.
The $\ell_1$ and $\ell_2$ norm of a vector $\mathbf{x}$ is denoted as $|\mathbf{x}|$ and $\|\mathbf{x}\|$, respectively.
The spectral norm of a matrix $\mathbf{X}$ is denoted as $\|\mathbf{X}\|$, which equals the maximum absolute value of eigenvalues of $\mathbf{X}$ when $\mathbf{X}$ is symmetric.
$\mathbb{I}^{K}$ is the $K\times K$ identity matrix. 
$\Tr(\mathbf{X})$ is the trace of $\mathbf{X}$, i.e., $\Tr(\mathbf{X})=\sum_k X_{k,k}$.
A gram matrix form of a PSD matrix $\mathbf{X}$ is denoted as $\gram(\mathbf{X}) \triangleq   [\mathbf{v}_1,\dots,\mathbf{v}_K]^{\rm T}$ where $\mathbf{X} =[\mathbf{v}_1,\dots,\mathbf{v}_K]^{\rm T}  [\mathbf{v}_1,\dots,\mathbf{v}_K]$. 

The rest of this paper is organized as follows.
Section \ref{sec:system_model} presents the WTSN system model. Section \ref{sec:graph_theory_analysis} defines the sparse interference graphs and explains the overall concept of the SIG-SDP framework. Sections \ref{sec:sdp_relaxation_framework}, \ref{sec:mmw_interference_management}  and \ref{sec:online_architecture} present the implementation of the SDP relaxation, the MMW-based SDP solver and the online architecture of the SIG-SDP framework. Finally, Section \ref{sec:simulation_results} shows the simulations that evaluated the proposed methods, and Section \ref{sec:conclusion} concludes this work.

\section{System Model and Problem Formulation}\label{sec:system_model}
This section presents the WTSN system model and the slot assignment problem for interference management.
\subsection{System Model}\label{subsec:tsn_system_model}
\begin{figure}[!t]
\centering
\includegraphics[scale=0.7]{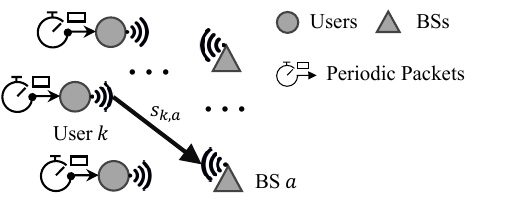}
\vspace{-0.1cm}
\caption{Illustration of a WTSN system.}
\label{fig:system_model_wtsn}
\vspace{-0.4cm}
\end{figure}
We consider a WTSN system, e.g., based on a 5G radio access network consisting of $K$ users and $A$ BSs, as illustrated in Fig. \ref{fig:system_model_wtsn}. 
The BSs are coordinated through a centralized controller and forms a virtual cell \cite{bjornson2020making}, where the users' information is maintained at the central controller and is shared among the BSs once the users join the network.
Here, we assume that users are sensors continuously collecting data to monitor critical information that must be reported to the BSs timely.
Users return the data samples to BSs in the uplink encapsulated in a specific format with a constant length of $L$ in bits.
All users and BSs are assumed to operate in the same channel with a bandwidth $B$ in Hertz (Hz).
We denote the noise power spectral density as $\mathbb{N}_\mathrm{0}$ in Watts/Hz.
All users and BSs are assumed to be synchronized in time, and the time is divided into slots indexed by $t$, where $t = 1, 2, \dots$. The duration of each slot is denoted as $\Delta_\mathrm{0}$ seconds.
Since the sample delay is the processing time at the user and the BS plus the transmission duration of the packet. Then, to minimize the sample delay, we consider a time-division multiple access (TDMA) scheme \cite{rappaport2024wireless} where users' packets are spread over the available spectrum for given time-frequency resources, i.e., minimizing the packet transmission time\footnote{In this work, the slots are considered as time slots, but it can be extended to minimize other orthogonal slots in frequency or code domains subject to interference constraints.}.
User transmissions are configured by semi-persistent scheduling \cite{jiang2007principle} with the same period $Z$. 
Users are divided into $Z$ slots in the period, e.g., $Z$ is a positive integer as
\begin{equation}\label{eq:const:Z_is_positive_integer}
\begin{aligned}
Z\in\mathbb{Z}.
\end{aligned}
\end{equation}
Users in different slots transmit their packets separately in the period.
Specifically, we denote the slots of user $1,\dots,K$ as $\mathbf{z}\triangleq[z_1,\dots,z_K]^{\rm T}$, where $z_k$ is user $k$'s slot satisfying
\begin{equation}\label{eq:const:z_k_in_Z_original}
\begin{aligned}
z_k \in\{1,\dots,Z\},\ \forall k.
\end{aligned}
\end{equation}
Users in the $z$-th slot are $\{k|z_k=z\}$, and they transmit only in periodical time slots $t''-1+z,t''-1+z+Z,t''-1+z+2Z,\dots$, where $t''$ is a non-negative integer indicating the starting slot of the assignments. Here, we assume that the network is static, and the slot assignments repeat and remain the same until the network terminates. Meanwhile, the dynamic allocation of slots will be discussed in Section \ref{sec:online_architecture}, where the slot assignments remains the same in a duration, $[t'',t')$, until the controller updates the assignments at some $t'$.
As a result, the interference only happens among users within the same slot and is eliminated between any two slots.
The periodicity $Z$ directly affects the freshness of status updates from users. A higher value of $Z$ increases delays in status updates, whereas a lower value of $Z$ results in more users sharing the same slots, leading to higher interference that can degrade reliability.

We denote the path gain in decimal from the $k$-th user to the $a$-th BS as $g_{k,a}$, $\forall k, a$. 
Each user is associated with and transmits its packets to the BS with the maximum received signal strength of the user.
Specifically, let $\hat{a}_k$ be the BS that each user $k$ is associated with, i.e., $\hat{a}_k \triangleq \arg \max_a g_{k,a}$, $\forall k$.
We assume that each BS can decode only one user's transmission in each slot.
The decoding error rate of a packet transmission can be estimated based on the SINR $\phi$ as \cite{yang2014quasi}
\begin{equation}\label{eq:bler}
\begin{aligned}
\epsilon (\phi) \approx f_Q\Bigg(\frac{  -L\ln{2}+ {\Delta_\mathrm{0} B}\ln(1+ \phi)}{\sqrt{\Delta_\mathrm{0} B (1- {1}/{[1+\phi]^2)}}}\Bigg),
\end{aligned}
\end{equation}
where $f_Q$ is the tail distribution function of the standard normal distribution. It can be verified that the decoding error rate is monotonically decreasing and increasing with regard to the SINR and the interference power, respectively. We denote the SINR of each user $k$ as $\phi_{k}$, and the decoding error rate of user $k$'s packet is approximated as $\epsilon (\phi_{k})$ using \eqref{eq:bler}.

The WTSN system requires reliable transmissions such that the decoding error rate of each user's packets is less than a threshold $\epsilon^{\max}$. Due to the monotonicity of the error rate in \eqref{eq:bler}, the SINR $\phi_{k}$ of each user $k$ needs to be larger than a threshold $\hat{\phi}$, where $\epsilon (\hat{\phi})=\epsilon^{\max}$.x
To achieve this, user $k$'s transmission power is configured as
\begin{equation}\label{eq:user_tx_power}
\begin{aligned}
P_k \triangleq  (1+\alpha) \hat{\phi} B\mathbb{N}_\mathrm{0} / g_{k, \hat{a}_k} ,\  \forall k ,
\end{aligned}
\end{equation}
where the received signal power at the associated AP is $(1+\alpha) \hat{\phi} B\mathbb{N}_\mathrm{0}$ and $\alpha>0$ indicates the additional transmission power ratio allowing small interference during transmissions while ensuring reliability.
We assume that a dense deployment of BSs is available and each user will always have a BS nearby. This allows each user transmits within its maximum power while ensuring that the received signal power at the associated BS is larger than $(1+\alpha) \hat{\phi} B\mathbb{N}_\mathrm{0}$, i.e., the channel capacity is large enough to transmit $L$ bits in a slot.
The power configuration in \eqref{eq:user_tx_power} leads to the SINR $\phi_{k}$ for each user $k$ as
\begin{equation}\label{eq:sinr}
\begin{aligned}
\phi_{k} 
&\triangleq \frac{P_kg_{k,\hat{a}_k}}{\sum_{k'\neq k}P_{k'} g_{k',\hat{a}_k}\mathbf{1}_{\{z_k = z_{k'}\}}  + B\mathbb{N}_\mathrm{0}} \\
&= \frac{(1+\alpha) \hat{\phi} B\mathbb{N}_\mathrm{0}}{\sum_{k'\neq k}P_{k'} g_{k',\hat{a}_k}\mathbf{1}_{\{z_k = z_{k'}\}} + B\mathbb{N}_\mathrm{0}} ,\ \forall k ,
\end{aligned}
\end{equation}
where $\sum_{k'\neq k}P_{k'} g_{k',\hat{a}_k}\mathbf{1}_{\{z_k = z_{k'}\}}$ is the total interference power from all other users $k'\neq k$ transmitting in the same slot as user $k$.
Since the SINR $\phi_{k}$ is required to be larger than $\hat{\phi}$, we can rearrange \eqref{eq:sinr} to obtain the interference power constraint for user $k$ as
\begin{equation}\label{eq:const:max_interference}
\begin{aligned}
&\phi_{k} \geq \hat{\phi} \Rightarrow \frac{(1+\alpha) \hat{\phi} B\mathbb{N}_\mathrm{0}}{\sum_{k'\neq k}P_{k'} g_{k',\hat{a}_k}\mathbf{1}_{\{z_k = z_{k'}\}}  + B\mathbb{N}_\mathrm{0}} \geq \hat{\phi} \\
&\Rightarrow
  \sum_{k'\neq k}P_{k'} g_{k',\hat{a}_k}\mathbf{1}_{\{z_k = z_{k'}\}}  \leq \alpha\hat{\phi} B\mathbb{N}_\mathrm{0}, \ \forall k .
\end{aligned}
\end{equation}
This means that the maximum interference power a user can tolerate depends on the additional transmission power ratio $\alpha$.

\subsection{Network States}
We collect the binary indicators on whether any two users are associated with the same BS in a $K\times K$ matrix as
\begin{equation}\label{eq:asso_matrix}
\begin{aligned}
\mathbf{Q} \triangleq [Q_{i,j}|Q_{i,j}=\mathbf{1}_{\{\hat{a}_i=\hat{a}_j\}}, \forall i\neq j; Q_{i,i}=0, \forall i] \ ,
\end{aligned}
\end{equation}
where $Q_{i,j}=1$ if users $i$ and $j$ are associated with the same BS (the closest BSs of two users are the same) or otherwise $Q_{i,j}=0$. 
Further, we consider BSs to have practical receiver sensitivity. When the receiving signal strength of user $k$ at BS $a$ is weak, e.g., $P_kg_{k,a}<\gamma B\mathbb{N}_\mathrm{0}$, the user's transmissions cannot be detected and received at the BS, i.e., this interference power value cannot be measured from the central controller's perspective.
Here, $\gamma$ indicates the threshold of the signal power that can be detected and measured. Note that those weak interference still exist in the network and impact the transmissions, while the system cannot measure them.
The measured interference power from the $k$-th user to the $a$-th BS is
\begin{equation}\label{eq:measurable_path_loss}
\begin{aligned}
s_{k,a} \triangleq 
\begin{cases}
    P_kg_{k,a}/(B\mathbb{N}_\mathrm{0}) ,\ \text{if } P_kg_{k,a}\geq \gamma B\mathbb{N}_\mathrm{0} , \\
    0,\ \text{if } P_kg_{k,a}<\gamma B\mathbb{N}_\mathrm{0},
\end{cases} \forall k,a,
\end{aligned}
\end{equation}
where we normalize the interference powers against the noise power if it is measurable, or otherwise set the interference powers as $0$.
All measured interference powers are collected and available at the controller when deciding the slot assignments.
We collect measured interference powers from one user to another user's BS in a $K\times K$ matrix as
\begin{equation}\label{eq:gain_matrix}
\begin{aligned}
\mathbf{S} \triangleq [S_{i,j}|S_{i,j}=s_{i, \hat{a}_j},  \forall i\neq j ;S_{i,i}=0,\forall i ].
\end{aligned}
\end{equation}
Here, $\mathbf{Q}$ and $\mathbf{S}$ are referred to as the network states.

\subsection{Interference Management Problem in WTSN}\label{subsec:im_problem_formulation}
Two users associated with the same BS must be assigned to two different slots to prevent simultaneous transmissions, as each BS can decode only one user per slot, i.e.,
\begin{equation}\label{eq:const:same_bs_diff_slot_original}
\begin{aligned}
\mathbf{1}_{\{z_k = z_{k'}\}} \leq 0, \ \forall Q_{k,k'} =1,
\end{aligned}
\end{equation}
where the inequality ensures the homogeneity of the constraints.
In addition, all packets should be reliably decoded subject to each user's decoding error rate being lower than the threshold, $\epsilon^{\max}$.
Due to the monotonicity of the error rate in SINR and interference power in \eqref{eq:bler}, the error rate requirement can be formulated as a constraint on the interference power experienced by each user as
\begin{equation}\label{eq:const:max_interference_original}
\begin{aligned}
\sum_{k'\neq k}S_{k',k} \mathbf{1}_{\{z_k = z_{k'}\}}\leq \alpha ,\  \forall k ,
\end{aligned}
\end{equation}
where $\alpha$ indicates the maximum interference power of user $k$.
Note that the interference powers, indicated by $S_{k',k}$ $\forall k'\neq k$, have been normalized against the noise as \eqref{eq:measurable_path_loss} and \eqref{eq:gain_matrix}.

The system objective is to reduce the number $Z$ of slots in each period, i.e., the interval of users' transmissions while ensuring reliability, which is mathematically formulated as 
\begin{equation}\label{eq:prob:im_problem_original}
\begin{aligned}
(\mathbf{z}^*, Z^*) \triangleq\arg\min_{\mathbf{z}, Z}  Z  , \ \text{s.t. } 
\eqref{eq:const:Z_is_positive_integer},
\eqref{eq:const:z_k_in_Z_original},
\eqref{eq:const:same_bs_diff_slot_original},
\eqref{eq:const:max_interference_original}.
\end{aligned}
\end{equation}
$Z^*$ in \eqref{eq:prob:im_problem_original} denotes the minimum number of slots where feasible slot assignment decisions $\mathbf{z}^*$ exist such that all constraints are satisfied. 
The above problem is an integer programming problem that is NP-Hard \cite{subramanian2008minimum}.

\begin{rem}
Note that we assume that users' locations are static in the above formulation in this section, where user-BS associations and interference powers do not change over time. Under this assumption, the network states $\mathbf{S}$ and $\mathbf{Q}$ are fixed, and the slot assignment decisions $\mathbf{z}$ and $Z$ are decided at the initialization of the network and remain the same over time. 
Based on this static scenario, we will present our methods in Sections \ref{sec:graph_theory_analysis}, \ref{sec:sdp_relaxation_framework} and \ref{sec:mmw_interference_management}. Meanwhile, later in Section \ref{sec:online_architecture}, we will formulate and study these methods to constantly update the slot assignments $\mathbf{z}$ and $Z$ in dynamic networks where user-BS associations and interference power change over time, e.g., due to user mobility.
\end{rem}

\section{Sparse Interference Graph-Aided SDP}\label{sec:graph_theory_analysis}
\begin{figure}[!t]
\centering
\includegraphics[scale=0.85]{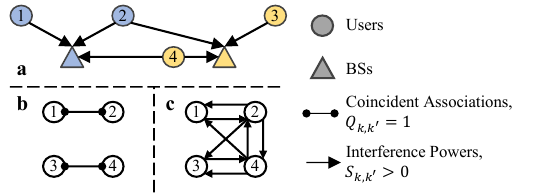}
\caption{A graphical illustration of WTSN: a) an interference network topology with 4 users and 2 BSs, and b) the association graph $\mathcal{G}^{\mathrm{asso}}$, and c) the interference-power graph $\mathcal{G}^{\mathrm{intp}}$.}
\label{fig:interference_graph_model}
\vspace{-0.4cm}
\end{figure}
This section first constructs graphs representing the interference and then shows the graphs' sparsity exploitation for interference management in the SIG-SDP framework.

\subsection{Definitions of Interference Graphs}
We define two graphs representing the interference in the network. 
Specifically, the first graph, $\mathcal{G}^{\mathrm{asso}} \triangleq (\mathcal{V}^{\mathrm{asso}},\mathcal{E}^{\mathrm{asso}})$, is a binary undirected graph indicating where a user pair is associated with the same BS, i.e., $\mathcal{V}^{\mathrm{asso}} \triangleq \{1,\dots,K\}$ and $\mathcal{E}^{\mathrm{asso}} \triangleq \{\{k,k'\}|Q_{k,k'}=1\}$.
The second graph is a directed weighted graph that collects the measured interference powers as $\mathcal{G}^{\mathrm{intp}} \triangleq (\mathcal{V}^{\mathrm{intp}},\mathcal{E}^{\mathrm{intp}})$, and its vertices and edges are $\mathcal{V}^{\mathrm{intp}} \triangleq \{1,\dots,K\}$ and $\mathcal{E}^{\mathrm{intp}} \triangleq \{(k,k')|S_{k,k'}>0, k\neq k' \}$,
where vertices are all users and an edge with a weight $S_{k,k'}$ connects two users if the interference power from user $k$ to the associated BS of user $k'$ is non-zero as \eqref{eq:measurable_path_loss}. 
The adjacency matrices of $\mathcal{G}^{\mathrm{asso}}$ and $\mathcal{G}^{\mathrm{intp}}$ correspond to $\mathbf{Q}$ in \eqref{eq:asso_matrix} and $\mathbf{S}$ in \eqref{eq:gain_matrix}, respectively. In other words, each non-zero element in $\mathbf{Q}$ and $\mathbf{S}$ corresponds to an edge in $\mathcal{G}^{\mathrm{asso}}$ and $\mathcal{G}^{\mathrm{intp}}$, respectively.
By defining the maximum number of neighbors of a user in $\mathcal{G}^{\mathrm{intp}}$, including in and out neighbors, as
\begin{equation}\label{eq:defi:graph_gain_max_neighbor}
\begin{aligned}
\Omega=\max_k |\{k'|(k,k') \ \text{or} \ (k',k) \in\mathcal{E}^{\mathrm{intp}}\}|,
\end{aligned}
\end{equation}
these two graphs' numbers of edges can be bounded based on the maximum number of neighbors in $\mathcal{G}^{\mathrm{intp}}$, $\Omega$, as
\begin{equation}
\begin{aligned}
|\mathcal{E}^{\mathrm{asso}}|\leq|\mathcal{E}^{\mathrm{intp}}|\leq K \Omega,
\end{aligned}
\end{equation}
which is because when two users $k$ and $k'$ are associated with the same BS, they interfere with each other's BS when sharing the same slot. Thus, they are both in and out neighbor of each other in $\mathcal{G}^{\mathrm{intp}}$.
Fig. \ref{fig:interference_graph_model}a shows a WTSN system with 4 users and 2 BSs, where users $1$ and $2$ are associated with the first BS, and users $3$ and $4$ are associated with the second one. 
Here, user $4$'s signal is also measurable at the first BS, and user $2$'s signal is measurable at the second BS. This means that user $4$ will interfere with users $1$ and $2$, and user $2$ will interfere with users $3$ and $4$ if they are assigned to the same slot.
Consequently, the association graph and the interference-power graph are constructed in Fig. \ref{fig:interference_graph_model}b and Fig. \ref{fig:interference_graph_model}c, respectively.

\subsection{SIG-SDP Framework for Interference Management}
The sparsity in the graphs $\mathcal{G}^{\mathrm{asso}}$ and $\mathcal{G}^{\mathrm{intp}}$ arises from the attenuation of the wireless signals. Specifically, user pairs that are far apart and associated with different BSs do not have an edge in $\mathcal{G}^{\mathrm{asso}}$, while pairs whose interference power falls below the measurable threshold do not have edges in $\mathcal{G}^{\mathrm{intp}}$. We establish the SIG-SDP framework that exploits the sparsity of these graphs for interference management as follows.

\subsubsection{Bounding Minimum Number of Slots in SDP}
We can provide the upper and the lower bounds on the minimum number $Z^*$ of slots based on the graphs $\mathcal{G}^{\mathrm{asso}}$ and $\mathcal{G}^{\mathrm{intp}}$ as
\begin{theorem}\label{theorem:min_slot_bounds}
The minimum number of slots, $Z^*$, in the interference management problem \eqref{eq:prob:im_problem_original} follows
\begin{equation}\label{eq:theorem:min_slot_bounds}
\begin{aligned}
1-\frac{\lambda_{\max}(\mathbf{Q})}{\lambda_{\min}(\mathbf{Q})}\leq \chi(\mathcal{G}^{\mathrm{asso}})\leq	Z^*\leq\chi(\mathcal{G}^{\mathrm{intp}}) \leq \Omega  + 1,
\end{aligned}
\end{equation}
where $\mathbf{Q}$ is the adjacency matrix of $\mathcal{G}^{\mathrm{asso}}$ and $\Omega$ is the maximum number of neighbors in $\mathcal{G}^{\mathrm{intp}}$.
$\chi(\mathcal{G}^{\mathrm{asso}})$ and $\chi(\mathcal{G}^{\mathrm{intp}})$ are chromatic numbers of $\mathcal{G}^{\mathrm{asso}}$ and $\mathcal{G}^{\mathrm{intp}}$, respectively.
\end{theorem}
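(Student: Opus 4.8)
The plan is to establish the chain of five inequalities in \eqref{eq:theorem:min_slot_bounds} by reading off each link separately, since they are essentially independent facts about the two interference graphs and the feasibility structure of \eqref{eq:prob:im_problem_original}. Working from the middle outward seems cleanest.

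First I would handle the two central inequalities $\chi(\mathcal{G}^{\mathrm{asso}})\leq Z^*\leq\chi(\mathcal{G}^{\mathrm{intp}})$, which are the crux and where I expect the main conceptual work to lie. For the upper bound $Z^*\leq\chi(\mathcal{G}^{\mathrm{intp}})$, the idea is that any proper vertex coloring of $\mathcal{G}^{\mathrm{intp}}$ already yields a feasible slot assignment: if $\chi(\mathcal{G}^{\mathrm{intp}})$ colors suffice, assign each user the slot index equal to its color. Then no two users joined by an edge in $\mathcal{G}^{\mathrm{intp}}$ share a slot, so every term $S_{k',k}\mathbf{1}_{\{z_k=z_{k'}\}}$ with $S_{k',k}>0$ vanishes, making the left side of \eqref{eq:const:max_interference_original} zero $\leq\alpha$; and since $Q_{k,k'}=1$ implies an edge in $\mathcal{G}^{\mathrm{intp}}$ (as noted in the text, co-associated users are mutual neighbors), constraint \eqref{eq:const:same_bs_diff_slot_original} holds as well. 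Hence a feasible assignment with $\chi(\mathcal{G}^{\mathrm{intp}})$ slots exists and $Z^*$ cannot exceed it. For the lower bound $\chi(\mathcal{G}^{\mathrm{asso}})\leq Z^*$, the argument runs the other way: take the optimal assignment $\mathbf{z}^*$ with $Z^*$ slots; constraint \eqref{eq:const:same_bs_diff_slot_original} forces $z^*_k\neq z^*_{k'}$ whenever $Q_{k,k'}=1$, i.e. whenever $\{k,k'\}\in\mathcal{E}^{\mathrm{asso}}$, so $\mathbf{z}^*$ is a proper coloring of $\mathcal{G}^{\mathrm{asso}}$ using at most $Z^*$ colors, whence $\chi(\mathcal{G}^{\mathrm{asso}})\leq Z^*$. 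The one subtlety to check carefully is that a proper coloring using $c$ colors really does correspond to an assignment with $Z=c$ and not fewer — but since we only need inequalities, using exactly $\chi(\mathcal{G}^{\mathrm{intp}})$ slots (some possibly empty) is harmless.

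Next the outer links. The rightmost inequality $\chi(\mathcal{G}^{\mathrm{intp}})\leq\Omega+1$ is the standard greedy-coloring bound: any graph with maximum degree $\Delta$ is $(\Delta+1)$-colorable, and $\Omega$ in \eqref{eq:defi:graph_gain_max_neighbor} is exactly the maximum degree of the underlying undirected graph of $\mathcal{G}^{\mathrm{intp}}$ (counting in- and out-neighbors together), so color the vertices in any order, each time picking a color not used by its at most $\Omega$ already-colored neighbors. The leftmost inequality $1-\lambda_{\max}(\mathbf{Q})/\lambda_{\min}(\mathbf{Q})\leq\chi(\mathcal{G}^{\mathrm{asso}})$ is the Hoffman spectral lower bound on the chromatic number applied to the graph $\mathcal{G}^{\mathrm{asso}}$ with adjacency matrix $\mathbf{Q}$; I would invoke it directly, noting only that $\mathcal{G}^{\mathrm{asso}}$ is a disjoint union of cliques (users partitioned by their associated BS), for which $\lambda_{\min}(\mathbf{Q})<0$ whenever there is at least one edge, so the ratio is well defined, and in the edgeless case all quantities collapse to the trivial bound $1\leq1\leq Z^*$.

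The main obstacle, such as it is, will be the careful bookkeeping in the middle two inequalities — specifically verifying that the coloring-to-assignment translation respects both \eqref{eq:const:same_bs_diff_slot_original} and \eqref{eq:const:max_interference_original} simultaneously, which hinges on the inclusion $\mathcal{E}^{\mathrm{asso}}\subseteq$ (undirected support of $\mathcal{E}^{\mathrm{intp}}$) established earlier in the section. Everything else is a citation of classical graph-coloring facts (Hoffman's bound and greedy coloring), so I would state those as lemmas or cite them and keep the proof short.
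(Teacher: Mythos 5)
Your proposal is correct and follows essentially the same route as the paper's proof: a proper coloring of $\mathcal{G}^{\mathrm{intp}}$ yields a feasible (interference-free) assignment, greedy coloring gives $\chi(\mathcal{G}^{\mathrm{intp}})\leq\Omega+1$, any feasible assignment is a proper coloring of $\mathcal{G}^{\mathrm{asso}}$, and the Hoffman bound supplies the leftmost inequality. Your treatment is somewhat more explicit than the paper's (e.g., verifying that the interference terms in \eqref{eq:const:max_interference_original} all vanish under a proper coloring, and handling the edgeless case where $\lambda_{\min}(\mathbf{Q})=0$), but the underlying argument is the same.
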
 
\begin{proof}
The proof uses chromatic numbers' bounds \cite{west2001introduction,hoffman2003eigenvalues} in $\mathcal{G}^{\mathrm{intp}}$ and $\mathcal{G}^{\mathrm{asso}}$ and is in the appendix.
\end{proof}
The bound in \eqref{eq:theorem:min_slot_bounds} shows a range of possible values of the minimum number of slots, $Z^*$, other than an unbounded range in \eqref{eq:const:Z_is_positive_integer}. 
This allows \eqref{eq:prob:im_problem_original} to be solved by iteratively checking whether there exists feasible $\mathbf{z}$ satisfying the remaining constraints \eqref{eq:const:z_k_in_Z_original}, \eqref{eq:const:same_bs_diff_slot_original}, and \eqref{eq:const:max_interference_original} for a given $Z$ within the bound in \eqref{eq:theorem:min_slot_bounds}, e.g., increasing $Z$ if no feasible $\mathbf{z}$ else decreasing $Z$ until $Z^*$ is found.
As a result, \eqref{eq:prob:im_problem_original} is reduced to a sequence of CSPs, where each CSP has optimization variables $\mathbf{z}$ as
\begin{equation}\label{eq:prob:csp_original}
\begin{aligned}
\mathrm{find}\ \mathbf{z},\ \text{s.t. }
\eqref{eq:const:z_k_in_Z_original},
\eqref{eq:const:same_bs_diff_slot_original},
\eqref{eq:const:max_interference_original},
\end{aligned}
\end{equation}
for given $Z$ in the bounds in \eqref{eq:theorem:min_slot_bounds}.
However, \eqref{eq:prob:csp_original} is still NP-hard to find the integer slot assignments $\mathbf{z}$ satisfying the constraints, while it has fewer optimization variables compared to \eqref{eq:prob:im_problem_original}, e.g., $Z$ is given.
We will explain the SDP relaxation that transforms the integer variables $\mathbf{z}$ to continuous ones and the search on the minimum number $Z^*$ of slots in Section \ref{sec:sdp_relaxation_framework}.

\subsubsection{Accelerating Computational Routines in SDP}
Given the SDP-relaxed problem, the solver needs to optimize the PSD matrix, $\mathbf{X}$, representing user pairs' assignment indicators $\mathbf{1}_{\{z_k = z_{k'}\}}$ $\forall k\neq k'$. Specifically, $\mathbf{X}$'s elements, $X_{i,j}$ $\forall k\neq k'$, indicate the likelihood that the user pair $(i, j)$ should be assigned to the same slot, i.e., how likely $\mathbf{1}_{\{z_k = z_{k'}\}} = 1$.  Denote the complexity of finding the optimal $\mathbf{X}$ in SDP as $\Gamma(K)$ for the given number of users, $K$. 
Directly applying widely-used SDP solvers, such as PDIP and ADMM, poses scalability challenges due to their high polynomial complexity, approximated as $\Gamma(K) \approx K^{\omega}$, where $\omega \gg 2$ \cite{jiang2020faster,o2016conic}. As a result, these solvers fail to return the optimized matrix in a timely manner as $K$ increases, making them unsuitable for scaling in large networks \cite{shi2015largeB}.

To address this complexity issue, we can leverage the sparsity of the constraint coefficients. Specifically, each constraint coefficient, $Q_{i,j}$ or $S_{i,j}$, in \eqref{eq:const:same_bs_diff_slot_original} and \eqref{eq:const:max_interference_original} corresponds to an edge $(i,j)$ in one of the graphs $\mathcal{G}^{\mathrm{asso}}$ or $\mathcal{G}^{\mathrm{intp}}$. 
The sparsity of these graphs implies that only a small subset of user pairs' assignments are constrained by \eqref{eq:const:same_bs_diff_slot_original} and \eqref{eq:const:max_interference_original}. Consequently, we can focus on optimizing only the elements $X_{i,j}$ that correspond to edges $(i,j)$ in either graph. This significantly reduces the complexity and enhances the scalability of the SDP solver.
To achieve this, we will design an accelerated SDP solver using the MMW algorithm, as explained in Section \ref{sec:mmw_interference_management}.

\section{Relaxation and Sparsity-Aware Binary Search in SIG-SDP}\label{sec:sdp_relaxation_framework}
This section first presents the SDP relaxation of the slot assignment problem in \eqref{eq:prob:csp_original} for the given number of slots, $Z$. We then design a binary search iterating $Z$ to find the minimum number of slots within the bounds derived in Theorem \ref{theorem:min_slot_bounds}.

\subsection{SDP Relaxation of Slot Assignment Task}\label{subsec:sdp_relaxation_im}
\begin{figure}[!t]
\centering
\includegraphics[scale=0.725]{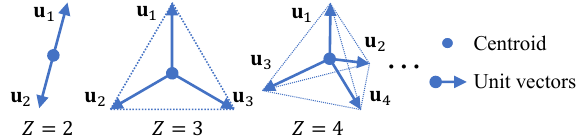}
\caption{Unit radial vectors in  regular $(Z-1)$-simplexes, with $Z=2,3,4,\dots$, representing all possible slot choices for given $Z$.}
\label{fig:simplexes}
\vspace{-0.4cm}
\end{figure}

First, for the given $Z$, we represent all possible slot choices $\{ 1, \dots, Z \}$ as vectors $\{ \mathbf{u}_1, \dots, \mathbf{u}_Z \}$, where each slot $z$ corresponds to a vector $\mathbf{u}_z$, $z=1,\dots,Z$. 
As shown in Fig. \ref{fig:simplexes}, each $\mathbf{u}_z$ is the unit radial vector from the centroid of the regular $(Z-1)$-simplex to the $z$-th vertex (for simplicity, we assume they have been normalized, i.e., $\|\mathbf{u}_z\|=1$ $\forall z$). 
The inner product of any two vectors in $\{ \mathbf{u}_1, \dots, \mathbf{u}_Z \}$ \cite{frieze1997improved} is 
\begin{equation}\label{eq:inner_product_of_uu_cases}
\begin{aligned}
\langle \mathbf{u}_z,\mathbf{u}_{z'}\rangle = 
\begin{cases}
 1 &, \ \text{if } z = z' , \\
-\frac{1}{Z-1} &,\ \text{otherwise} .
\end{cases}
\end{aligned}
\end{equation}
For user $k$, deciding which slot in $\{ 1, \dots, Z \}$ is assigned as $z_k$ is equivalent to deciding which vector in $\{ \mathbf{u}_1, \dots, \mathbf{u}_Z \}$ is assigned to user $k$, i.e., the constraint in \eqref{eq:const:z_k_in_Z_original} is rewritten as
\begin{equation}\label{eq:vector_representation_of_z}
\begin{aligned}
\mathbf{v}_{k} \in \{ \mathbf{u}_1, \dots, \mathbf{u}_Z \}, \forall k  ,
\end{aligned}
\end{equation}
where $\mathbf{v}_{k}$ is the vector choice of user $k$. 
Using the property of vector choices in \eqref{eq:inner_product_of_uu_cases}, we can express the indicator function on whether two users $k$ and $k'$ are in the same slot or not as
\begin{equation}\label{eq:indicator_z_z_equal_X}
\begin{aligned}
\mathbf{1}_{\{z_k=z_{k'}\}}= \mathbf{1}_{\{\mathbf{v}_k=\mathbf{v}_{k'}\}} &= \frac{1}{Z}( 1 + (Z-1) \langle \mathbf{v}_{k},\mathbf{v}_{k'}\rangle) ,
\end{aligned}
\end{equation}
which is a function on the inner product between vector choices, $\langle \mathbf{v}_{k},\mathbf{v}_{k'}\rangle$.
We use a $K\times K$ matrix $\mathbf{X}$ to represent inner products between all user pairs' vector choices (or the slot assignment decisions) as
\begin{equation}\label{eq:const:vv_to_X}
\begin{aligned}
\mathbf{X} \triangleq [\mathbf{v}_1,\dots,\mathbf{v}_K]^{\rm T}[\mathbf{v}_1,\dots,\mathbf{v}_K] ,
\end{aligned}
\end{equation}
where $X_{k,k'}$ represents the inner product between $\mathbf{v}_k$ and $\mathbf{v}_{k'}$, $\langle \mathbf{v}_k,\mathbf{v}_{k'}\rangle$ $\forall k,k'$, and $X_{k,k}=1$ $\forall k $ (as all vector choices are normalized). $\mathbf{X}$ is a symmetric positive semidefinite matrix. 
Elements of $\mathbf{X}$ replace the inner product and rewrite the indication function in \eqref{eq:indicator_z_z_equal_X} and the constraint in \eqref{eq:const:same_bs_diff_slot_original} as
\begin{equation}\label{eq:const:same_bs_diff_slot_X}
\begin{aligned}
\frac{1}{Z}\big( 1 + (Z-1) X_{k,k'} \big) \leq 0 , \ \forall Q_{k,k'} =1 .
\end{aligned}
\end{equation}
Also, the interference power constraint \eqref{eq:const:max_interference_original} is recast as
\begin{equation}\label{eq:const:max_interference_X}
\begin{aligned}
\sum_{k'\neq k}S_{k',k} \cdot \frac{1}{Z}\big( 1 + (Z-1) X_{k',k}\big)\leq\alpha ,\ \forall k .
\end{aligned}
\end{equation}

The relaxation of \eqref{eq:prob:csp_original} is made by removing the constraints \eqref{eq:vector_representation_of_z} and \eqref{eq:const:vv_to_X} on $\mathbf{X}$ being the inner products of the simplex's unit radial vectors and by allowing $\mathbf{X}$ to be any PSD matrix with all-$1$ on its diagonal, e.g.,
\begin{equation}\label{eq:const:X_psd_diag1}
\begin{aligned}
\mathbf{X} \succeq 0, \ X_{k,k} = 1 ,\ \forall k.
\end{aligned}
\end{equation}
In other words, the CSP in \eqref{eq:prob:csp_original} is relaxed to
\begin{equation}\label{eq:prob:csp_X_relaxed}
\begin{aligned}
\mathrm{find}\ \mathbf{X} , \ \text{s.t. }
\eqref{eq:const:same_bs_diff_slot_X},
\eqref{eq:const:max_interference_X},\eqref{eq:const:X_psd_diag1} .
\end{aligned}
\end{equation}
Here, the integer slot (or vector) choices in \eqref{eq:prob:csp_original} are relaxed to the continuous variables in a PSD matrix in \eqref{eq:prob:csp_X_relaxed}, i.e., the integer CSP \eqref{eq:prob:csp_original} is relaxed into the SDP CSP \eqref{eq:prob:csp_X_relaxed}.

\subsection{Sparsity-Aware Binary Search Using Derived Bounds}\label{subsec:binary_search_flow}
The binary search initializes the range on the minimum number of slots with the lower and the upper bounds as $Z^\mathrm{a}=\lceil 1-\lambda_{\max}(\mathbf{Q})/\lambda_{\min}(\mathbf{Q})\rceil$ and  $Z^\mathrm{b}=\Omega +1$, respectively, as stated in Theorem~\ref{theorem:min_slot_bounds}. 
In each binary search iteration, the searched $Z$ is set to the average of the upper and the lower bounds, i.e., $Z=\lfloor(Z^{\mathrm{a}}+Z^{\mathrm{b}})/2\rfloor$ and the relaxed CSP in \eqref{eq:prob:csp_X_relaxed} is solved with the given $Z$ using the SDP solver as $\mathrm{sdpslv}(\cdot)$. The solver takes the users' interference powers $\mathbf{S}$, associations $\mathbf{Q}$ and the number of slots $Z$ as its input and returns a gram matrix form of the optimal PSD matrix of \eqref{eq:prob:csp_X_relaxed}, $\mathbf{X}'$, i.e.,
\begin{equation}\label{eq:sdp_solver_procedure}
\begin{aligned}
\gram(\mathbf{X}') \triangleq   [\mathbf{v}'_1,\dots,\mathbf{v}'_K]^{\rm T} =  \mathrm{sdpslv}(\mathbf{S},\mathbf{Q},Z),
\end{aligned}
\end{equation}
where $\mathbf{X}'=[\mathbf{v}'_1,\dots,\mathbf{v}'_K]^{\rm T}  [\mathbf{v}'_1,\dots,\mathbf{v}'_K]$. 
Note that $\mathbf{v}'_k$ are $D$-dimensional vectors, $\forall k$, where $D$ depends on the solver's implementation and is a constant at a value of $\mathcal{O}(Z)$, which we shall discuss in detail later in Section \ref{sec:mmw_interference_management}.

Here, $\gram(\mathbf{X}')$ are relaxed vector choices, of which the inner products indicate how closer any two user's vector choices are and how likely two users should be in the same slot \cite{goemans1995improved,frieze1997improved}. 
We round $\gram(\mathbf{X}')$ to the integer slot assignments in each binary search iteration.
Let $\mathcal{V}^{z}$ be initialized as an empty set for each iteration to collect users in slot $z$, $z=1,\dots,Z$.
We generate $Z$ random unit vector, $\delta^1,\dots,\delta^Z$, with the same dimension as $\mathbf{v}'_k$, $\forall k$, and sort the slot indices, $z=1,\dots,Z$, based on the inner product between $\delta^z$ and $\mathbf{v}'_k$ in descending order for each user $k=1,\dots,K$ as
\begin{equation}\label{eq:rounding_order}
\begin{aligned}
z^{k}_1, z^{k}_2,\dots, z^{k}_Z ,\ \text{where}\ \langle \delta^{z^{k}_i}, \mathbf{v}'_{k} \rangle\geq \langle \delta^{z^{k}_j}, \mathbf{v}'_{k}  \rangle ,\ i<j.
\end{aligned}
\end{equation}
Then, $\mathcal{V}^{z}$ is updated for each slot $z=z^{k}_1, z^{k}_2,\dots, z^{k}_Z$ in the above list for the given user $k$ by testing the satisfaction of integer constraints \eqref{eq:const:same_bs_diff_slot_original}\eqref{eq:const:max_interference_original} (before relaxation) for all neighboring users' and user $k$ itself as
\begin{equation}\label{eq:rounding_check_per_user}
\begin{aligned}
\mathcal{V}^{z} \leftarrow 
\begin{cases}
\mathcal{V}^{z}\cup\{k\},
&\forall k' \in\mathcal{V}^{z}, Q_{k,k'} = 0,\\
&\text{and } \forall k' \in\mathcal{V}^{z}\cup\{k\}, \\
&\sum_{k''\in (\mathcal{V}^{z}\cup\{k\})} S_{k'',k'}\leq \alpha, \\
\mathcal{V}^{z} ,\ &\text{otherwise}, \ 
\end{cases}
\end{aligned}
\end{equation}
where if all constraints are satisfied then we add $k$ in $\mathcal{V}^{z}$ and move to the next user $k+1$; or otherwise $\mathcal{V}^{z}$ remains the same and we test the next slot in \eqref{eq:rounding_order} for user $k$.
Here, if two users' vector choices are close, they will be close to the same random vector, and the corresponding slot choices are ranked at the front of the list in \eqref{eq:rounding_order}. Consequently, these two users are more likely to be allocated in the same slot \cite{frieze1997improved}.

After all users' slot choices are tested, if all users are assigned to a slot, i.e., $|\cup_z \mathcal{V}^{z}| = K$, then $Z$ is feasible, and we set the upper bound $Z^\mathrm{b}$ to $Z$ to search for a smaller slot number. Otherwise, when some users are not assigned, we then adjust the lower bound $Z^\mathrm{a}$ to $Z + 1$ to search for a larger slot number. 
The binary search repeats the above process in the next iteration for  $Z=\lfloor(Z^{\mathrm{a}}+Z^{\mathrm{b}})/2\rfloor$ and stops until $Z^\mathrm{a}$ and $Z^\mathrm{b}$ are equal, which implies that the minimum number of slots is found as $Z^\mathrm{a}$ (or $Z^\mathrm{b}$). 
The final slot assignment is configured as the one in the last iteration as 
\begin{equation}\label{eq:slot_set_of_rounded_users}
\begin{aligned}
z_k = z,\ \forall k \in \mathcal{V}^{z} ,\ \forall z = 1,\dots,Z  .
\end{aligned}
\end{equation}
The binary search of SIG-SDP for the aforementioned WTSN interference management is summarized in Algorithm~\ref{alg:im-sdp_framework}.

\begin{algorithm}[!t]
\caption{Sparsity-Aware Binary Search of SIG-SDP}
\label{alg:im-sdp_framework}
\begin{algorithmic}[1]
\PROCEDURE{$\mathrm{SdpBinarySearch}$}{$\mathbf{S},\mathbf{Q}$}
\STATE Initialize $Z^{\mathrm{a}}=\lceil 1-\lambda_{\max}(\mathbf{Q})/\lambda_{\min}(\mathbf{Q})\rceil$.
\STATE Initialize $Z^{\mathrm{b}}=\Omega+1$.

\FOR{$m=1,2,\dots$}
    \STATE $Z\leftarrow \lfloor(Z^{\mathrm{a}}+Z^{\mathrm{b}})/2\rfloor$.
    \STATE Call $\gram(\mathbf{X}') = \mathrm{sdpslv}(\mathbf{S},\mathbf{Q},Z)$ to solve \eqref{eq:prob:csp_X_relaxed}.\label{alg:line:sdp_solver}
    \STATE Round $\gram(\mathbf{X}')$ to $\mathcal{V}^{1},\dots,\mathcal{V}^{Z}$ as \eqref{eq:rounding_order} and \eqref{eq:rounding_check_per_user}.
    \STATE \textbf{if} $|\cup_z \mathcal{V}^{z}| = K$ \textbf{then} Set $Z^{\mathrm{b}}=Z$.
    \STATE \textbf{else} Set $Z^{\mathrm{a}}=Z+1$.
    \STATE \textbf{if} $Z^{\mathrm{a}}=Z^{\mathrm{b}}$ \textbf{then} Set $z_k$ as \eqref{eq:slot_set_of_rounded_users}, and \textbf{break}.
\ENDFOR
\STATE \textbf{return} $\mathbf{X}'$, $Z$ and $\mathbf{z}$.
\ENDPROCEDURE\label{alg:line:rounding_end}
\end{algorithmic}
\end{algorithm}

\subsection{Complexity of Binary Search of SIG-SDP in Algorithm \ref{alg:im-sdp_framework}}\label{subsec:sdp_framework_complexity}
The initialization of the slot bounds requires the maximum and minimum eigenvalues of $\mathbf{Q}$, of which the computation is approximately at the complexity of the number of non-zero elements in $\mathbf{Q}$, i.e., $\mathcal{O}(\Omega K)$, using iterative eigenvalue algorithms \cite{lehoucq1998arpack}.
The rounding is repeated for $K$ users, and rounding for each user requires computing the inner products between the random vector and the user's vector choices (with $\mathcal{O}(ZD)\approx\mathcal{O}(\Omega^2)$ complexity), sorting of the inner products (with $\mathcal{O}(Z\log Z)\approx \mathcal{O}(\Omega\log \Omega)$ complexity) and checking the interference constraints (with $\mathcal{O}(Z\Omega)\approx\mathcal{O}(\Omega^2)$ complexity as only neighboring users' interference requires processing). Here, $Z$ and $D$ is approximately at $\mathcal{O}(\Omega)$.
Moreover, the number of iterations in the binary search is upper bounded by $\mathcal{O}(\log\Omega)$, considering the search range.
The major complexity of the framework is at the SDP solver, and we will reduce it by exploiting the interference graphs' sparsity in the next.



\section{Sparsity-Accelerated MMW Solver in SIG-SDP}\label{sec:mmw_interference_management}
In this section, we design the MMW-based SDP solver to solve the SDP CSP in \eqref{eq:prob:csp_X_relaxed}. We first format the SDP CSP in \eqref{eq:prob:csp_X_relaxed} in a canonical form and then provide an overview of how the MMW solves the formatted problem, followed by the detailed implementation exploiting the graph sparsity and the complexity analysis.
\subsection{Formatting the Canonical Form of the SDP CSP}\label{subsec:canonical_form_sdp}
The SDP CSP \eqref{eq:prob:csp_X_relaxed} is formatted in the canonical form as
\begin{defi}[Canonical Form of SDP CSPs]
Let $\mathcal{X}\subseteq \mathbb{R}^{K\times K} $, $\mathcal{X}\triangleq\{\mathbf{X}|\mathbf{X}\succeq 0, \Tr(\mathbf{X})=K\}$. The primal SDP CSP is defined as
\begin{equation}\label{eq:definition:sdp_canonical_form_primal}
\begin{aligned}
\mathrm{find}\ \mathbf{X}\in \mathcal{X},\  \text{s.t. } \mathbf{A}^{(c)}\bullet\mathbf{X}\leq0, c=1,\dots,C,
\end{aligned}
\end{equation}
where $C$ is the number of constraints and $\mathbf{A}^{(c)}$ is the  coefficient matrix of the $c$-th constraint. Here, all coefficient matrices are symmetric and normalized in spectral norm as
\begin{equation}
\begin{aligned}
\mathbf{A}^{(c)}=(\mathbf{A}^{(c)})^{\rm T}  ,\ \|\mathbf{A}^{(c)}\|=1 ,\ \forall c  . 
\end{aligned}
\end{equation}
Let $\mathcal{Y}\subseteq \mathbb{R}^{C\times 1}$ and $\mathcal{Y}\triangleq\{\mathbf{y}|\mathbf{y}\geq 0,\ |\mathbf{y}|=1  \}$. Then, the corresponding dual problem of \eqref{eq:definition:sdp_canonical_form_primal} is
\begin{equation}\label{eq:definition:sdp_canonical_form_dual}
\begin{aligned}
\mathrm{find}\ \mathbf{y}\in \mathcal{Y},\  \text{s.t. } \sum^{C}_{c=1} y_c \cdot \mathbf{A}^{(c)} \succeq 0 .
\end{aligned}
\end{equation}
\end{defi}
The detailed definition of the coefficient matrices for the canonical form of problem \eqref{eq:prob:csp_X_relaxed} is explained in the appendix. 
Note that the total number of constraints, $C$, in the canonical form of problem \eqref{eq:prob:csp_X_relaxed} follows
\begin{equation}\label{eq:linearity_in_K_and_C}
\begin{aligned}
C \triangleq 2K+|\mathcal{E}^{\mathrm{asso}}| \leq 2K+K\Omega \approx \mathcal{O}(K\Omega),
\end{aligned}
\end{equation}
where $\Omega$ is the upper bound of a user's maximum number of neighbors in interference graphs, as defined in Section \ref{sec:graph_theory_analysis}. The duality in the above canonical form can be stated as
\begin{theorem}\label{theorem:duality_of_the_canonical_form}
The duality of the primal and dual SDP CSP problems can be described as follows. Define an indicator,
\begin{equation}\label{eq:definition:sdp_canonical_form_feasibility_indicator}
\begin{aligned}
\mathcal{J}\triangleq\min_{\mathbf{X}\in \mathcal{X}}\max_{\mathbf{y}\in \mathcal{Y}} \sum_c y_c \cdot \mathbf{A}^{(c)} \bullet \mathbf{X} ;
\end{aligned}
\end{equation}
Define the duality gap between primal and dual variables as
\begin{equation}\label{eq:definition:sdp_canonical_form_duality_gap}
\begin{aligned}
\mathrm{gap}(\mathbf{y},\mathbf{X}) \triangleq \max_c \mathbf{A}^{(c)} \bullet \mathbf{X}- \lambda_{\min}(\sum_c y_c \cdot \mathbf{A}^{(c)})K .
\end{aligned}
\end{equation}
Then, $\mathrm{gap}(\mathbf{y},\mathbf{X})\geq 0 $ for all $\mathbf{y}\in\mathcal{Y}$ and $\mathbf{X}\in\mathcal{X}$.
Furthermore, there exist $\mathbf{y}^*\in\mathcal{Y}$ and $\mathbf{X}^*\in\mathcal{X}$ such that $\mathrm{gap}(\mathbf{y}^*,\mathbf{X}^*)=0$, i.e., the strong duality holds between  \eqref{eq:definition:sdp_canonical_form_primal} and  \eqref{eq:definition:sdp_canonical_form_dual}. When the duality gap is $0$,  $\lambda_{\min}(\sum_c y^*_c \cdot \mathbf{A}^{(c)})K =\max_c \mathbf{A}^{(c)} \bullet \mathbf{X}^*=\mathcal{J}$.
\end{theorem}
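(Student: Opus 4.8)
The plan is to treat the quantity $\mathcal{J}$ in \eqref{eq:definition:sdp_canonical_form_feasibility_indicator} as a minimax saddle value and to exploit the fact that both $\mathcal{X}=\{\mathbf{X}\succeq 0,\Tr(\mathbf{X})=K\}$ and $\mathcal{Y}=\{\mathbf{y}\geq 0,|\mathbf{y}|=1\}$ are convex and compact, with the payoff $F(\mathbf{X},\mathbf{y})=\sum_c y_c\,\mathbf{A}^{(c)}\bullet\mathbf{X}$ bilinear (hence concave–convex) and continuous. By Sion's minimax theorem we may swap the order, so $\mathcal{J}=\max_{\mathbf{y}\in\mathcal{Y}}\min_{\mathbf{X}\in\mathcal{X}}\sum_c y_c\,\mathbf{A}^{(c)}\bullet\mathbf{X}$, and a saddle point $(\mathbf{X}^*,\mathbf{y}^*)$ exists. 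The two inner optimizations have closed forms: for fixed $\mathbf{y}$, writing $\mathbf{A}(\mathbf{y})\triangleq\sum_c y_c\mathbf{A}^{(c)}$, we have $\min_{\mathbf{X}\in\mathcal{X}}\mathbf{A}(\mathbf{y})\bullet\mathbf{X}=\lambda_{\min}(\mathbf{A}(\mathbf{y}))\,K$, attained by putting all trace mass on a minimal-eigenvalue eigenvector; for fixed $\mathbf{X}$, $\max_{\mathbf{y}\in\mathcal{Y}}\sum_c y_c\,\mathbf{A}^{(c)}\bullet\mathbf{X}=\max_c \mathbf{A}^{(c)}\bullet\mathbf{X}$, attained at a vertex of the simplex.

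First I would establish $\mathrm{gap}(\mathbf{y},\mathbf{X})\geq 0$ for all feasible pairs. This is the weak-duality direction: for any $\mathbf{X}\in\mathcal{X}$ and $\mathbf{y}\in\mathcal{Y}$,
\begin{equation}\label{eq:weak_duality_chain}
\max_c \mathbf{A}^{(c)}\bullet\mathbf{X}\;\geq\;\sum_c y_c\,\mathbf{A}^{(c)}\bullet\mathbf{X}\;=\;\mathbf{A}(\mathbf{y})\bullet\mathbf{X}\;\geq\;\lambda_{\min}(\mathbf{A}(\mathbf{y}))\,K,
\end{equation}
where the first inequality holds because $\mathbf{y}$ is a probability vector and averaging is at most the maximum, and the last inequality is the eigenvalue bound $\mathbf{M}\bullet\mathbf{X}\geq\lambda_{\min}(\mathbf{M})\Tr(\mathbf{X})$ for $\mathbf{X}\succeq 0$ (which follows by diagonalizing $\mathbf{M}$ and using $X_{ii}\geq 0$ in that basis, or equivalently from $\mathbf{X}\bullet(\mathbf{M}-\lambda_{\min}(\mathbf{M})\mathbb{I}^K)\geq 0$ since both factors are PSD). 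Subtracting gives $\mathrm{gap}(\mathbf{y},\mathbf{X})\geq 0$.

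Next I would prove existence of a zero-gap pair. Take the saddle point $(\mathbf{X}^*,\mathbf{y}^*)$ guaranteed by Sion's theorem, so that $\min_{\mathbf{X}}F(\mathbf{X},\mathbf{y}^*)=\mathcal{J}=\max_{\mathbf{y}}F(\mathbf{X}^*,\mathbf{y})$. Using the two closed forms above, the left identity reads $\lambda_{\min}(\mathbf{A}(\mathbf{y}^*))\,K=\mathcal{J}$ and the right identity reads $\max_c\mathbf{A}^{(c)}\bullet\mathbf{X}^*=\mathcal{J}$; subtracting yields $\mathrm{gap}(\mathbf{y}^*,\mathbf{X}^*)=0$ and simultaneously the chain of equalities $\lambda_{\min}(\mathbf{A}(\mathbf{y}^*))K=\max_c\mathbf{A}^{(c)}\bullet\mathbf{X}^*=\mathcal{J}$ claimed in the theorem. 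I would also note that the phrase ``strong duality holds between \eqref{eq:definition:sdp_canonical_form_primal} and \eqref{eq:definition:sdp_canonical_form_dual}'' follows by the usual feasibility dichotomy: if $\mathcal{J}\leq 0$ then $\mathbf{X}^*$ is primal-feasible, and if $\mathcal{J}>0$ then $\lambda_{\min}(\mathbf{A}(\mathbf{y}^*))>0$ certifies $\mathbf{y}^*$ as a dual Farkas-type infeasibility certificate, so exactly one of the two problems is feasible.

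The main obstacle is not any single calculation but making the minimax-swap rigorous: one must verify the hypotheses of Sion's theorem (compactness of at least one set — here $\mathcal{X}$ and $\mathcal{Y}$ are both compact — together with the concave/convex and semicontinuity conditions on the bilinear form), and then confirm that the abstract saddle value coincides with the two concrete extremal formulas. A secondary subtlety worth stating carefully is the eigenvalue inequality $\mathbf{M}\bullet\mathbf{X}\geq\lambda_{\min}(\mathbf{M})\Tr(\mathbf{X})$ and the fact that it is tight over $\mathcal{X}$, since the whole identification of $\min_{\mathbf{X}\in\mathcal{X}}\mathbf{A}(\mathbf{y})\bullet\mathbf{X}$ with $\lambda_{\min}(\mathbf{A}(\mathbf{y}))K$ — and hence the appearance of $\lambda_{\min}$ in $\mathrm{gap}$ — rests on it. Everything else is routine bookkeeping.
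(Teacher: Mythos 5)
Your proposal is correct and follows essentially the same route as the paper: invoke the minimax theorem for the bilinear payoff over the compact convex sets $\mathcal{X}$ and $\mathcal{Y}$, identify the inner optima with $\max_c \mathbf{A}^{(c)}\bullet\mathbf{X}$ and $\lambda_{\min}(\sum_c y_c\mathbf{A}^{(c)})K$, and read off weak duality and the zero-gap saddle pair. Your only departures are cosmetic improvements — proving weak duality by the direct chain \eqref{eq:weak_duality_chain} rather than through $\mathcal{J}$, and spelling out the eigenvalue bound $\mathbf{M}\bullet\mathbf{X}\geq\lambda_{\min}(\mathbf{M})\Tr(\mathbf{X})$ — which the paper leaves implicit.
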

\begin{proof}
    The proof is in the appendix.
\end{proof}

\subsection{Overview of MMW for Interference Management}
\begin{figure}[!t]
\centering
\includegraphics[scale=0.675]{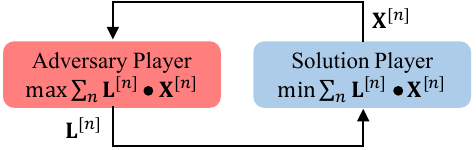}
\vspace{-0.1cm}
\caption{Illustration of the matrix multiplicative weights algorithm as a turn-based zero-sum game of two players for turns $n=1,\dots,N$.}
\label{fig:mmw_two_players}
\vspace{-0.4cm}
\end{figure}
A MMW algorithm can be viewed as a turn-based zero-sum game of two players \cite{arora2007combinatorial}, as shown in Fig. \ref{fig:mmw_two_players}. Specifically, in the $n$-th turn, $n=1,\dots,N$, an adversary player first generates a symmetric loss matrix $\mathbf{L}^{[n]}$ that has a bounded spectral norm (without loss of generality, we assume that the upper bound on the loss matrix's norm is $1$) as
\begin{equation}\label{eq:mmw:loss_condition}
\begin{aligned}
\mathbf{L}^{[n]} = (\mathbf{L}^{[n]})^{\rm T} , \ \|\mathbf{L}^{[n]}\| \leq 1  ,\ \forall n  . 
\end{aligned}
\end{equation}
On the other hand, the solution player generates a PSD solution matrix with its trace equal to $K$ as
\begin{equation}\label{eq:mmw:solution_condition}
\begin{aligned}
\mathbf{X}^{[n]}\succeq 0 , \ \Tr (\mathbf{X}^{[n]})= K , \ \forall n  .
\end{aligned}
\end{equation}

The adversary or solution player aims to maximize or minimize the loss (i.e., the solution player's loss is the adversary player's reward), respectively, where the loss is $\sum_{n=1}^{N}\mathbf{L}^{[n]}\bullet\mathbf{X}^{[n]}$ in all played turns. This structure can be used to solve the interference management SDP CSP by setting the loss as the constraint violations of given PSD matrices, i.e.,  minimizing/maximizing the loss is equivalent to minimizing/maximizing the constraint violations.
For example, in each turn, the adversary player generates the loss matrix $\mathbf{L}^{[n]}$ as a weighted combination of the constraint coefficient matrices as $\sum_c y^{[n]}_c \mathbf{A}^{(c)}$, where $y^{[n]}_c$ is the weight of the $c$-th constraint in the $n$-th turn. To maximize the loss $\sum_n\mathbf{L}^{[n]}\bullet\mathbf{X}^{[n]} = \sum_n\sum_c y^{[n]}_c \mathbf{A}^{(c)}\bullet\mathbf{X}^{[n]}$, the adversary player sets $y^{[n]}_c$ to a larger value if the $c$-th constraint has a higher violation for the PSD matrices generated by the solution player. 
On the other hand, the solution player uses matrix exponential to approximate the PSD matrix $\mathbf{X}^{[n]}$ that minimizes total constraint violations for given weights of constraints, $\sum_n\sum_c y^{[n]}_c \mathbf{A}^{(c)}\bullet\mathbf{X}^{[n]}$.
In this structure, $\mathbf{X}^{[n]}$ and $[y^{[n]}_1,\dots,y^{[n]}_C]$ correspond to the iterated primal and dual variables, respectively, in the canonical SDP CSP.
The detailed implementation of the MMW is explained in the next.
\subsection{Implementation of MMW}
First, the solver formats the canonical form of \eqref{eq:prob:csp_X_relaxed} according to network states $\mathbf{S}$ and $\mathbf{Q}$ and the given slot number $Z$.
Then, we initialize a constant step size $\eta$, $\eta< 1$, to control the update rate of variables and a constant $D$, $D>1$, to control the accuracy when computing matrix exponential. 
In the $n$-th turn, $n=1,\dots,N$, the adversary player configures the weights of constraints (or the dual variables) using the hedge rule \cite{freund1997decision} as
\begin{equation}\label{eq:mmw:vector_exponential}
\begin{aligned}
&\mathbf{y}^{[n]}\triangleq [y_1^{[n]},\dots,y_C^{[n]}] = \\
&\mathrm{Softmax}([\eta \sum_{n'=1}^{n}\mathbf{A}^{(1)}\bullet\mathbf{X}^{[n']},\dots, \eta \sum_{n'=1}^{n}\mathbf{A}^{(C)}\bullet\mathbf{X}^{[n']}]),
\end{aligned}
\end{equation}
where $\mathrm{Softmax}(\cdot)$ is the softmax function that outputs $\mathbf{y}^{[n]}\in \mathcal{Y}$. Here, $\mathbf{X}^{[1]} = \mathbb{I}^{K}$ as the PSD matrix in the first turn.
In \eqref{eq:mmw:vector_exponential}, $y_c^{[n]}$ is larger if the $c$-th constraint's accumulated violation is relatively higher than other constraints.
The loss matrix in each iteration is generated as
\begin{equation}\label{eq:mmw:loss_matrix}
\begin{aligned}
\mathbf{L}^{[n]} = \sum_{c=1}^{C} y_c^{[n]} \cdot \mathbf{A}^{(c)}, \ \forall n  , 
\end{aligned}
\end{equation}
as mentioned before. Note that the constraint coefficient matrices $\mathbf{A}^{(c)}$ have been normalized with a spectrum norm $1$ in the canonical form, $\forall c$. Thus, the loss matrix in the above has a spectrum norm less than $1$ due to the triangle inequality (as $\sum_{c} y_c^{[n]}=1$). Also, the loss matrix is symmetric because the constraint coefficient matrices are symmetric. Therefore, the condition in \eqref{eq:mmw:loss_condition} is satisfied.

The solution player in MMW methods \cite{arora2007combinatorial} specifies a matrix exponential on the sum of played loss matrices to generate the solution matrix in each turn as
\begin{equation}\label{eq:mmw:matrix_exponential}
\begin{aligned}
\mathbf{X}^{[n+1]} = \frac{K\exp(-\eta \sum_{n'=1}^{n}\mathbf{L}^{[n']})}{\Tr\big(\exp(-\eta \sum_{n'=1}^{n}\mathbf{L}^{[n']})\big)}, \ \forall n  ,
\end{aligned}
\end{equation}
which satisfies the condition in \eqref{eq:mmw:solution_condition} and is a validate solution in $\mathcal{X}$.
We use randomized sketching \cite{carmon2019rank} that approximates the matrix exponential in \eqref{eq:mmw:matrix_exponential}. In detail, let 
\begin{equation}\label{eq:mmw:expm_approximation}
\begin{aligned}
\big[\mathbf{v}^{[n]}_1,\dots,\mathbf{v}^{[n]}_K \big]^{\rm T} \triangleq \exp(-\frac{\eta}{2} \sum_{n'=1}^{n}\mathbf{L}^{[n']}) \mathbf{R}^{[n]}  ,
\end{aligned}
\end{equation}
where $\mathbf{R}^{[n]}$ is a random $K\times D$ matrix with each row normalized and the larger $D$ is, the closer the approximation of the matrix exponential will be.
Then, the matrix exponential in the numerator of \eqref{eq:mmw:matrix_exponential} can be approximated as \cite{carmon2019rank}
\begin{equation}\label{eq:mmw:inner_product_of_matrix_exponential}
\begin{aligned}
&\exp(-\eta \sum_{n'=1}^{n}\mathbf{L}^{[n']}) \stackrel{\text{(a)}}{=} \exp(-\frac{\eta}{2} \sum_{n'=1}^{n}\mathbf{L}^{[n']})  \exp(-\frac{\eta}{2}  \sum_{n'=1}^{n}\mathbf{L}^{[n']}) \\
&=\exp(-\frac{\eta}{2}  \sum_{n'=1}^{n}\mathbf{L}^{[n']}) \mathbb{I}^{K} \exp(-\frac{\eta}{2}  \sum_{n'=1}^{n}\mathbf{L}^{[n']}) \\
&\stackrel{\text{(b)}}{\approx}\exp(-\frac{\eta}{2}  \sum_{n'=1}^{n}\mathbf{L}^{[n']}) \mathbf{R}^{[n]} (\mathbf{R}^{[n]})^{\rm T}  \exp(-\frac{\eta}{2}  \sum_{n'=1}^{n}\mathbf{L}^{[n']}) \\
&=\big[\mathbf{v}^{[n]}_1,\dots,\mathbf{v}^{[n]}_K \big]^{\rm T} \big[\mathbf{v}^{[n]}_1,\dots,\mathbf{v}^{[n]}_K \big],
\end{aligned}
\end{equation}
where (a) is because all loss matrices are symmetric \cite{arora2007combinatorial} and (b) is because the expected value of $\mathbf{R}^{[n]} (\mathbf{R}^{[n]})^{\rm T} $ is the identity matrix.
The matrix trace in the denominator of \eqref{eq:mmw:matrix_exponential} can be approximated as
\begin{equation}\label{eq:mmw:trace}
\begin{aligned}
\Tr\big(\exp(-\eta \sum_{n'=1}^{n}\mathbf{L}^{[n']})\big) \approx \sum_{k=1}^{K}\|\mathbf{v}^{[n]}_k\|.
\end{aligned}
\end{equation}
Thus, the PSD matrix in \eqref{eq:mmw:matrix_exponential} in the $n+1$-th turn is computed based on \eqref{eq:mmw:expm_approximation}\eqref{eq:mmw:inner_product_of_matrix_exponential}\eqref{eq:mmw:trace} as
\begin{equation}\label{eq:mmw:matrix_exponential_vector_representaion}
\begin{aligned}
\mathbf{X}^{[n+1]} \approx \frac{K}{\sum_{k=1}^{K}\|\mathbf{v}^{[n]}_k\|}\big[\mathbf{v}^{[n]}_1,\dots,\mathbf{v}^{[n]}_K \big]^{\rm T} \big[\mathbf{v}^{[n]}_1,\dots,\mathbf{v}^{[n]}_K \big].
\end{aligned}
\end{equation}
The above processes are repeated for $N$ times. We average the iterated primal and dual variables as
\begin{equation}
\begin{aligned}
\Bar{\textbf{X}} \triangleq \frac{1}{N}\sum_{n=1}^{N}\mathbf{X}^{[n]}, \ \Bar{\textbf{y}} \triangleq \frac{1}{N}\sum_{n=1}^{N}\mathbf{y}^{[n]}.
\end{aligned}
\end{equation}
We use truncated singular value decomposition (TSVD) \cite{strang2006linear} to approximate $\Bar{\textbf{X}}$ using its $D$ most significant components as
\begin{equation}\label{eq:mmw:tsvd_x_bar}
\begin{aligned}
\mathbf{U}  \mathbf{\Sigma}  \mathbf{U}^{\rm T} = \mathbf{tsvd}(\Bar{\textbf{X}},D),
\end{aligned}
\end{equation}
where $\mathbf{U}$ and $\mathbf{\Sigma}$ is a $K\times D $ and $D\times D$ matrix, respectively, and $\mathbf{\Sigma}$ is a diagonal matrix with the largest $D$ eigenvalues of $\Bar{\textbf{X}}$ on its diagonal and $\mathbf{U}$ collects the corresponding eigenvectors \cite{strang2006linear}. The gram form of $\Bar{\textbf{X}}$ is approximated as
\begin{equation}\label{eq:mmw:gram_x_bar}
\begin{aligned}
\gram(\Bar{\textbf{X}}) \approx  \mathbf{U} (\mathbf{\Sigma})^{\frac{1}{2}} .
\end{aligned}
\end{equation}
Finally, $\mathbf{U}(\mathbf{\Sigma})^{\frac{1}{2}}$ is returned from the solver to the SDP framework. The MMW for interference management is listed in Algorithm \ref{alg:mmw-im} and is interfaced with the binary search of SIG-SDP in line \ref{alg:line:sdp_solver} of Algorithm \ref{alg:im-sdp_framework} as $\mathrm{mmwsdpslv}(\mathbf{S},\mathbf{Q},Z)$.

\begin{algorithm}[!t]
\caption{MMW for Interference Management}\label{alg:mmw-im}
\begin{algorithmic}[1]
\PROCEDURE{$\mathrm{mmwsdpslv}$}{$\mathbf{S},\mathbf{Q},Z$}
\STATE Format the canonical form \eqref{eq:prob:csp_X_relaxed} as Section \ref{subsec:canonical_form_sdp}.
\STATE Initialize $\eta$, $D$.
\FOR{$n$ = $1,\dots,N$}
    \STATE Compute $\mathbf{y}^{[n]}$ as \eqref{eq:mmw:vector_exponential}.
    \STATE Compute $\mathbf{L}^{[n]}$ as \eqref{eq:mmw:loss_matrix}.
    \STATE Compute $\mathbf{X}^{[n+1]}$ in \eqref{eq:mmw:matrix_exponential} as \eqref{eq:mmw:expm_approximation}-\eqref{eq:mmw:matrix_exponential_vector_representaion}.
\ENDFOR
\STATE Compute the TSVD of $\Bar{\textbf{X}}$ as \eqref{eq:mmw:tsvd_x_bar}.
\STATE \textbf{return} $\gram(\Bar{\textbf{X}})$ as \eqref{eq:mmw:gram_x_bar}.
\ENDPROCEDURE
\end{algorithmic}
\end{algorithm}

\subsection{Convergence of MMW}
Assuming randomized sketching of the matrix exponential \eqref{eq:mmw:matrix_exponential} in \eqref{eq:mmw:expm_approximation}-\eqref{eq:mmw:matrix_exponential_vector_representaion} is exact, the convergence of the primal and dual variables iterated in MMW for interference management in Algorithm \ref{alg:mmw-im} can be stated as follows.
\begin{theorem}\label{theorem:mmw_convergence}
Let the number of iterations in the MMW, $N= \frac{1}{\eta^2}(\ln K + \ln C)$. Then, the duality gap for the average of iterated primal and dual variables, $\mathrm{gap}(\Bar{\mathbf{y}},\Bar{\mathbf{X}})$, in MMW of Algorithm \ref{alg:mmw-im} is less than $\mathcal{O}(\eta K)$.
\end{theorem}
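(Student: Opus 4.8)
The plan is to read Algorithm~\ref{alg:mmw-im} as a repeated zero-sum game in which the dual player runs exponential weights (hedge) over the constraint simplex $\mathcal{Y}$ while the primal player runs matrix exponential weights over $\mathcal{X}$, obtain a no-regret bound for each player, and then combine the two using the weak-duality half of Theorem~\ref{theorem:duality_of_the_canonical_form}. Throughout, write $g_c^{[n]}\triangleq\mathbf{A}^{(c)}\bullet\mathbf{X}^{[n]}$ for the violation of constraint $c$ at turn $n$, $\mathbf{g}^{[n]}\triangleq(g_1^{[n]},\dots,g_C^{[n]})$, and $(\mathbf{g}^{[n]})^2$ for its entrywise square. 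Two uniform estimates that the canonical form supplies are used repeatedly: $\|\mathbf{L}^{[n]}\|\le 1$ (triangle inequality in \eqref{eq:mmw:loss_matrix} with $\sum_c y_c^{[n]}=1$ and $\|\mathbf{A}^{(c)}\|=1$), and $|g_c^{[n]}|\le\|\mathbf{A}^{(c)}\|\,\Tr(\mathbf{X}^{[n]})=K$.

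For the dual player, \eqref{eq:mmw:vector_exponential} is exactly the hedge iterate on the cumulative gain vectors $\eta\sum_{n'}\mathbf{g}^{[n']}$, so the textbook potential argument with $\Phi^{[n]}=\ln\sum_c\exp(\eta\sum_{n'\le n}g_c^{[n']})$ and the increment bound $e^x\le1+x+x^2$ (valid once $\eta|g_c^{[n]}|<1$), telescoped over $n$, gives \[\max_c\sum_{n=1}^N g_c^{[n]}\;\le\;\sum_{n=1}^N\mathbf{L}^{[n]}\bullet\mathbf{X}^{[n]}+\frac{\ln C}{\eta}+\eta\sum_{n=1}^N\langle\mathbf{y}^{[n]},(\mathbf{g}^{[n]})^2\rangle,\] using $\langle\mathbf{y}^{[n]},\mathbf{g}^{[n]}\rangle=\mathbf{L}^{[n]}\bullet\mathbf{X}^{[n]}$. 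For the primal player, \eqref{eq:mmw:matrix_exponential} is the matrix exponentiated-gradient iterate on the losses $-\eta\mathbf{L}^{[n]}$; applying the Golden--Thompson inequality $\Tr\exp(\mathbf{A}+\mathbf{B})\le\Tr(\exp\mathbf{A}\exp\mathbf{B})$, the operator bound $\exp(-\eta\mathbf{L}^{[n]})\preceq\mathbb{I}-\eta\mathbf{L}^{[n]}+\eta^2(\mathbf{L}^{[n]})^2$ (legitimate since $\|\eta\mathbf{L}^{[n]}\|<1$), telescoping, and $\Tr\exp(-\eta\sum_n\mathbf{L}^{[n]})\ge\exp(-\eta\lambda_{\min}(\sum_n\mathbf{L}^{[n]}))$ yields \[\sum_{n=1}^N\mathbf{L}^{[n]}\bullet\mathbf{X}^{[n]}\;\le\;K\,\lambda_{\min}\!\Big(\sum_{n=1}^N\mathbf{L}^{[n]}\Big)+\frac{K\ln K}{\eta}+\eta\sum_{n=1}^N(\mathbf{L}^{[n]})^2\bullet\mathbf{X}^{[n]}.\]

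Combining is then purely linear-algebraic: adding the two inequalities cancels $\sum_n\mathbf{L}^{[n]}\bullet\mathbf{X}^{[n]}$, and dividing by $N$ together with $\frac1N\sum_n\mathbf{L}^{[n]}=\sum_c\bar y_c\mathbf{A}^{(c)}$, $\frac1N\sum_n\mathbf{A}^{(c)}\bullet\mathbf{X}^{[n]}=\mathbf{A}^{(c)}\bullet\bar{\mathbf{X}}$, and positive homogeneity of $\lambda_{\min}$ gives \[\mathrm{gap}(\bar{\mathbf{y}},\bar{\mathbf{X}})=\max_c\mathbf{A}^{(c)}\bullet\bar{\mathbf{X}}-\lambda_{\min}\!\Big(\sum_c\bar y_c\mathbf{A}^{(c)}\Big)K\;\le\;\frac{\ln C+K\ln K}{\eta N}+\eta W,\] where $W\triangleq\frac1N\sum_{n=1}^N\big(\langle\mathbf{y}^{[n]},(\mathbf{g}^{[n]})^2\rangle+(\mathbf{L}^{[n]})^2\bullet\mathbf{X}^{[n]}\big)$; the left-hand side is nonnegative by Theorem~\ref{theorem:duality_of_the_canonical_form}, so this also controls $|\mathrm{gap}|$. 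Substituting $N=\frac1{\eta^2}(\ln K+\ln C)$ makes the first term at most $\eta\,\frac{\ln C+K\ln K}{\ln K+\ln C}\le\eta K$, so the theorem reduces to proving $W=\mathcal{O}(K)$.

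Bounding $W$ is where I expect the real work to lie. Its matrix part is immediate: $(\mathbf{L}^{[n]})^2\bullet\mathbf{X}^{[n]}\le\|\mathbf{L}^{[n]}\|^2\Tr(\mathbf{X}^{[n]})=K$, since $(\mathbf{L}^{[n]})^2\preceq\mathbb{I}$ and $\mathbf{X}^{[n]}\succeq0$. The scalar part $\langle\mathbf{y}^{[n]},(\mathbf{g}^{[n]})^2\rangle$ is the delicate one: the crude estimate $|g_c^{[n]}|\le K$ only yields $\mathcal{O}(K^2)$, so obtaining $\mathcal{O}(K)$ forces one to exploit the explicit structure of the canonical form built in the appendix — each $\mathbf{A}^{(c)}$ being a constant-rank perturbation of a small multiple of $\mathbb{I}$ (so $\|\mathbf{A}^{(c)}\|_F=\mathcal{O}(1)$) together with a sharper, iterate-aware bound on $g_c^{[n]}=\Tr(\mathbf{A}^{(c)}\mathbf{X}^{[n]})$ — so that $\langle\mathbf{y}^{[n]},(\mathbf{g}^{[n]})^2\rangle$ stays linear in $K$; without such structure one would only get an $\mathcal{O}(\eta K^2)$ gap. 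Two routine points then close the argument: the index convention in \eqref{eq:mmw:vector_exponential} (the weights at turn $n$ already see $\mathbf{g}^{[n]}$) is absorbed by the standard be-the-leader stability step in the hedge analysis, and — exactly as the theorem assumes — the randomized sketch \eqref{eq:mmw:expm_approximation}--\eqref{eq:mmw:matrix_exponential_vector_representaion} is treated as reproducing the matrix exponential exactly, so no sketching error propagates through the telescoping.
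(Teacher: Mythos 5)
Your overall architecture is exactly the paper's: a hedge regret bound for the dual player, a matrix--MW regret bound for the primal player, cancellation of the common term $\sum_n\mathbf{L}^{[n]}\bullet\mathbf{X}^{[n]}$, the averaging identities $\frac1N\sum_n\mathbf{L}^{[n]}=\sum_c\bar y_c\mathbf{A}^{(c)}$ and $\frac1N\sum_n\mathbf{A}^{(c)}\bullet\mathbf{X}^{[n]}=\mathbf{A}^{(c)}\bullet\bar{\mathbf{X}}$, and then the substitution $N=\frac{1}{\eta^2}(\ln K+\ln C)$. The difference is that the paper simply invokes the two regret bounds from \cite{arora2007combinatorial} and \cite{freund1997decision} in the range-normalized form $\eta NK+\frac{\ln K}{\eta}K$ and $\eta NK+\frac{\ln C}{\eta}K$ (the loss range $[-K,K]$ entering \emph{linearly}), whereas you re-derive them from the potential argument and therefore carry the explicit second-order terms.

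That is where your proof has a genuine gap, and you say so yourself: the theorem reduces to $W=\mathcal{O}(K)$, but the only estimate available for the hedge part is $\langle\mathbf{y}^{[n]},(\mathbf{g}^{[n]})^2\rangle\leq\max_c(g_c^{[n]})^2\leq K^2$, which yields $\mathrm{gap}(\bar{\mathbf{y}},\bar{\mathbf{X}})=\mathcal{O}(\eta K^2)$, not $\mathcal{O}(\eta K)$. Your proposed escape route does not close it: bounding $|g_c^{[n]}|=|\Tr(\mathbf{A}^{(c)}\mathbf{X}^{[n]})|$ via $\|\mathbf{A}^{(c)}\|_F\,\|\mathbf{X}^{[n]}\|_F$ with $\|\mathbf{A}^{(c)}\|_F=\mathcal{O}(1)$ still gives $\mathcal{O}(K)$, because $\|\mathbf{X}^{[n]}\|_F$ can be as large as $\Tr(\mathbf{X}^{[n]})=K$ precisely when $\mathbf{X}^{[n]}$ is low-rank --- and the iterates here are low-rank by construction ($D=\mathcal{O}(Z)\ll K$). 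So $(g_c^{[n]})^2=\Theta(K^2)$ is genuinely attainable and no iterate-aware refinement of the crude bound rescues the argument. The way the paper (implicitly) resolves this is to apply the hedge regret bound with the gains rescaled by their range, i.e., effectively running hedge with rate $\eta/K$ on the raw violations, which converts the second-order term from $\eta NK^2$ into $\eta NK$ at the price of inflating the entropy term to $\frac{\ln C}{\eta}K$; after substituting $N$ both terms are $\mathcal{O}(\eta K)$ and the paper's $3\eta K$ follows. If you want to complete your derivation honestly, you should either adopt that rescaling explicitly (and note it as a condition on how \eqref{eq:mmw:vector_exponential} is parameterized) or accept the weaker $\mathcal{O}(\eta K^2)$ conclusion; as written, the claim $W=\mathcal{O}(K)$ is asserted but not proved. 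The remaining points in your write-up (the matrix second-order term $(\mathbf{L}^{[n]})^2\bullet\mathbf{X}^{[n]}\leq K$, the index-convention stability step, treating the sketch as exact, and the arithmetic $\frac{\ln C+K\ln K}{\ln K+\ln C}\leq K$) are all correct.
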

\begin{proof}
    The proof is in the appendix.
\end{proof}

Theorem \ref{theorem:mmw_convergence} shows the duality gap converges to $0$ as we set $\eta$ to a smaller value. Based on Theorem \ref{theorem:duality_of_the_canonical_form}, Theorem \ref{theorem:mmw_convergence} implies the maximum constraint violation converges to the minimum value $\mathcal{J}$. 
Meanwhile, note that Theorem \ref{theorem:mmw_convergence} is based on the assumption that the randomized sketching in \eqref{eq:mmw:matrix_exponential_vector_representaion} approximates the matrix exponential exactly. In practice, the approximation error of the randomized sketching will increase the duality gap in iterations \cite{carmon2019rank}. 
To reduce this error, we can set $D$ larger so that $\mathbf{R}^{[n]}(\mathbf{R}^{[n]})^{\rm T}$ is closer to the identity matrix and the approximation in \eqref{eq:mmw:expm_approximation}-\eqref{eq:mmw:matrix_exponential_vector_representaion} is more accurate.
Nevertheless, the error only scales the gap in Theorem \ref{theorem:mmw_convergence} up to a constant factor (greater than $1$), as shown in \cite{carmon2019rank}, which remains the same as $\mathcal{O}(\eta K)$ in big-$\mathcal{O}$ notation.
On the other hand, for a given $Z$, the PSD matrix (before relaxation) is the inner product of vector choices from the $(Z-1)$-simplex, as discussed in Section \ref{subsec:sdp_relaxation_im}, which means the matrix has a low rank at most $Z-1$.
Ideally, we want the solver to find the optimal PSD matrix that has a rank lower than $Z-1$ or a rank around $Z-1$ in the relaxed problem. 
Note that the PSD matrix optimized by the MMW has a rank $D$ due to the random sketching in \eqref{eq:mmw:expm_approximation}-\eqref{eq:mmw:matrix_exponential_vector_representaion}.
Based on the above facts, we set $D \approx \mathcal{O} (Z)$, e.g., $\lceil\beta (Z-1)\rceil$, where $\beta (Z-1)\ll K$ and $\beta \geq 1$ for a constant $\beta$, leading to low-rank PSD matrices while keeping a low random approximation error that can be ignored in practice. 

\subsection{MMW's Routines and Complexity Exploiting Sparsity}\label{subsec:mmw_complexity}
We then discuss the computational routines of the MMW in Algorithm \ref{alg:mmw-im} and their complexity.
First, formatting the canonical form requires computing the norm of the constraint coefficient matrices, which is $\mathcal{O}(K\Omega)$ complexity by processing only the non-zero elements in $\mathbf{A}^{(c)}$ (since $\mathbf{A}^{(c)}$ have total $\mathcal{O}(K\Omega)$ non-zero elements $\forall c$, as defined in appendix.
Then, in each iteration of MMW, constraint weights $\mathbf{y}^{[n]}$ in \eqref{eq:mmw:vector_exponential} is constructed by computing $\mathbf{A}^{(c)}\bullet\mathbf{X}^{[n]}$ in the current iteration and added to the accumulated values in previous iterations, $\sum_{n'<n}\mathbf{A}^{(c)}\bullet\mathbf{X}^{[n']}$, $\forall c$.
When computing $\mathbf{A}^{(c)}\bullet\mathbf{X}^{[n]}$ $\forall c$, we first evaluate the non-zero off-diagonal elements in all constraints, $\sum_{i,j}A^{(c)}_{i,j}X^{[n]}_{i,j}$, $\forall A^{(c)}_{i,j}\neq0$ and $i\neq j$, which costs $\mathcal{O}(K\Omega)$ complexity. 
For diagonal elements in constraints $c=1\dots K$, $A^{(c)}_{k,k}$ are the same for all $k$ except $A^{(c)}_{c,c}$.
Further, due to $\sum_k X^{[n]}_{k,k} = K$, the diagonal elements $\sum_{k}A^{(c)}_{k,k}X^{[n]}_{k,k}$, $c=1\dots K$, can be computed in $\mathcal{O}(K)$ using the distributive law that combines sums. 
For diagonal elements in constraints $c=K+1,\dots,C$, the computation has $\mathcal{O}(C)\approx \mathcal{O}(K\Omega)$ complexity since they are scaled identity matrices as defined in the appendix.
Similarly, computing the loss matrix in \eqref{eq:mmw:loss_matrix} has $\mathcal{O}(K\Omega)$ using distributive law that combines the computation of the identical elements.
When approximating the matrix exponential in \eqref{eq:mmw:matrix_exponential}, we first add the loss matrix in the current iteration to the sum of those in the previous iteration as $\sum_{n'<n}\mathbf{L}^{[n']} + \mathbf{L}^{[n]}$, which is $\mathcal{O}(K\Omega)$ complexity by adding only non-zero elements.
\eqref{eq:mmw:expm_approximation} is computed based on Taylor's expansion on the matrix exponential \cite{al2010new} that is approximately $D$ times the complexity at the number of non-zero elements in $\sum_{n'=1}^{n}\mathbf{L}^{[n']}$, i.e., $\mathcal{O}(D \cdot K\Omega)$.
The inner products $\langle \mathbf{v}^{[n]}_i,\mathbf{v}^{[n]}_j\rangle$ in \eqref{eq:mmw:inner_product_of_matrix_exponential}, \eqref{eq:mmw:trace} and \eqref{eq:mmw:matrix_exponential_vector_representaion} are only computed for $i,j$ where $\exists c$ $A^{(c)}_{i,j}\neq0$, i.e., with a complexity at $\mathcal{O}(D \cdot K\Omega )$.
The average of the PSD matrices and constraint weights are computed by first adding their current iteration's values to ones in previous iterations and then averaging them when iterations finish. It requires $\mathcal{O}(K\Omega)$ for both adding the PSD matrix, i.e., adding only elements computed in the inner products, and adding constraint weights. Finally, the TSVD in \eqref{eq:mmw:tsvd_x_bar} and \eqref{eq:mmw:gram_x_bar} has $\mathcal{O}(D \cdot K\Omega)$ complexity using iterative eigenvalue algorithms \cite{lehoucq1998arpack}, where $\mathcal{O}(K\Omega)$ is the number of computed elements in $\Bar{\mathbf{X}}$ in above steps, while other elements in $\Bar{\mathbf{X}}$ are set to $0$.

In summary, each iteration of Algorithm \ref{alg:mmw-im} can be implemented at $\mathcal{O}(D \cdot K\Omega) \approx \mathcal{O}(K\Omega^2)$ and the whole algorithm is at complexity $\mathcal{O}(N K\Omega^2)$, where $K$, $\Omega$ and $N$ are the number of users, the maximum number of neighbors in interference graphs and the number of MMW iterations, respectively. By substituting the MMW complexity into the complexity derived in Section \ref{subsec:sdp_framework_complexity}, the SIG-SDP framework has complexity $\mathcal{O}(N K\Omega^2\log \Omega)$ when using the MMW as the SDP solver. Later, in simulations, we will show that $N$ can be set to a constant while ensuring convergence performance regardless of network sizes. Moreover, simulations will show that $\Omega$ converges to a constant in large networks. In other words, the framework's complexity is approximated at a near-linear complexity w.r.t. the number of users, i.e.,  $\mathcal{O}(K)$.

\section{Online Architecture of SIG-SDP}\label{sec:online_architecture}
\begin{figure}[!t]
\centering
\includegraphics[scale=0.75]{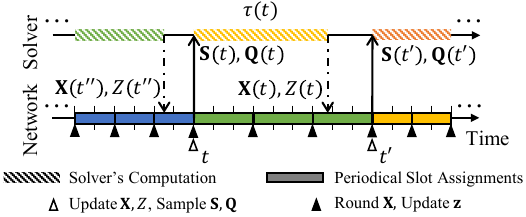}
\caption{Interaction between the solver and the network over time.}
\label{fig:online_architecture}
\vspace{-0.4cm}
\end{figure}

In this section, we explain the online architecture that uses the proposed SIG-SDP framework to continuously adjust the slot number $Z$ and the slot assignment $\mathbf{z}$ for dynamic networks, e.g., due to users' mobility. The framework computes slot assignments according to online measured path gains and runs in parallel with the network, as shown in Fig. \ref{fig:online_architecture}. Without explicit notation, the PSD matrix $\mathbf{X}$ returned from the SDP solver is stored in its gram form $\gram(\mathbf{X})$ in this section.
Specifically, the network states, including the interference powers and user associations, and the slot assignments, including the integer or relaxed slot assignments and the slot number, are denoted as $\mathbf{S}(t)$, $\mathbf{Q}(t)$, $\mathbf{z}(t)$, $\mathbf{X}(t)$, and $Z(t)$ at slot $t$, respectively. The details on how the network states and slot assignments are measured and updated are as follows.

\subsection{Online Architecture Design}
At the initialization of the network, i.e., at $t=1$, the network measures the states, including the interference powers $\mathbf{S}(t)$ and the user associations $\mathbf{Q}(t)$ and sends them to the solver. The solver runs the framework in Algorithm \ref{alg:im-sdp_framework} based on these states to compute the relaxed slot assignments $\mathbf{X}(t)$ and the slot number $Z(t)$ based on given $\mathbf{S}(t)$ and $\mathbf{Q}(t)$.
The network waits until the solver finishes at slot $t'$.
We assume that the network makes no transmissions during the first computation of the solver in the initialization, i.e., from $t=1$ to $t'$.
Once slot $t'$ is reached, the initialization completes, and we update slot indices as
\begin{equation}\label{eq:online_slot_update}
\begin{aligned}
t'' \leftarrow t,\ t \leftarrow t'  ,
\end{aligned}
\end{equation}
where $t''$ denotes the slot when the previous solver computation started. Next, at current slot $t$, the network re-measures $\mathbf{S}(t)$ and $\mathbf{Q}(t)$ and the solver recomputes $\mathbf{X}(t)$ and $Z(t)$ as
\begin{equation}\label{eq:online_sdp}
\begin{aligned}
\mathbf{X}(t) ,  Z(t) \leftarrow \mathrm{SdpBinarySearch}(\mathbf{S}(t), \mathbf{Q}(t)),
\end{aligned}
\end{equation}
where we only take the PSD matrix and the slot number from the solver's returned values in Algorithm \ref{alg:im-sdp_framework}.
When the solver is computing $\mathbf{X}(t)$ and $Z(t)$, the network repeatedly rounds the slot assignments based on the previous returned $\mathbf{X}(t'')$ and $Z(t'')$ and the latest network states $\mathbf{S}(i)$ and $\mathbf{Q}(i)$ as
\begin{equation}\label{eq:online_rounding}
\begin{aligned}
\mathcal{V}^{1},\dots,\mathcal{V}^{Z} &\leftarrow \text{Round}\ \gram(\mathbf{X}(t'')) \ \text{as \eqref{eq:rounding_order},\eqref{eq:rounding_check_per_user}}, \\
z_k(i) &\leftarrow  z , \ \forall k \in \mathcal{V}^{z}, \ \forall z , \\
z_k(i) &\leftarrow  z \sim \mathcal{U}\{1,Z(t'')\}, \ \forall k \notin \cup_z \mathcal{V}^{z} ,
\end{aligned}
\end{equation}
where slot $i$ is the starting slot of each transmission period during the solver's current computation, i.e., $i=t,t+Z(t''), t+2Z(t''),\dots$. Here, we assign a random slot in $\mathcal{U}\{1,Z(t'')\}$ to users without a slot assigned (i.e., they have unsatisfied interference constraints due to the variation of the network). The network schedules users' transmissions according to the slot assignments in \eqref{eq:online_rounding}, i.e., user $k$ transmits in the slot $i+z_k(i)$ in the period $[i,i+Z(t''))$. The slot assignment process in \eqref{eq:online_rounding} repeats until the solver finishes computing $\mathbf{X}(t)$ and $Z(t)$, i.e., until slot $t'$ where
\begin{equation}\label{eq:online_next_time_slot}
\begin{aligned}
t' \leftarrow t+\lceil \tau(t)/(\Delta_\mathrm{0} Z(t'')) \rceil \cdot Z(t'').
\end{aligned}
\end{equation}
Here, $\tau(t)$ is the duration of the solver's computation started at $t$. Then, the processes \eqref{eq:online_slot_update}-\eqref{eq:online_next_time_slot} are repeated to continuously adjust slot assignments until the network terminates.

Overall, the proposed architecture updates the relaxed slot assignment $\mathbf{X}(t)$ and the slot number $Z(t)$ for each time the solver returns while updating the integer slot assignments $\mathbf{z}(i)\triangleq [z_1(i),\dots,z_K(i)]$ every transmission period in parallel. The shorter the solver's computing time, the more timely the relaxed slot assignment and the slot number are.

\begin{figure}[!t]
\centering
\includegraphics[scale=0.7]{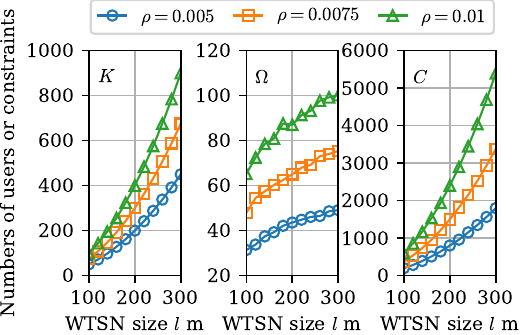}
\vspace{-0.1cm}
\caption{The number of users and constraints in simulated networks.}
\label{fig:plot_graph_test}
\vspace{-0.4cm}
\end{figure}

\subsection{Real-World Implementation Considerations}\label{subsec:real_world_implementation}
Although this paper focuses on the theoretical framework and algorithm design, we also discuss practical aspects of implementing the SIG-SDP framework in real-world systems.

First, we discuss how the control/measurement signalling should be implemented. The path gains are measured at the BSs when a user sends a data packet. This is feasible in practice as the path gains can be estimated from the pilot signals sent by users in the uplink control channels \cite{3gpp38211,3gpp38213}. BSs send the measured path gains to the controller. The controller can use these path gains to construct the interference graphs, $\mathbf{S}(t)$ and $\mathbf{Q}(t)$, and then run the designed solver to compute the slot assignments.
Once the solver returns the relaxed slot assignments $\mathbf{X}(t)$ and the slot number $Z(t)$, the controller can round the relaxed slot assignments to integer slot assignments, as described in \eqref{eq:online_rounding}.
At the start of each transmission period, the controller informs each BS about the slot assignments for their associated users and which users they should decode in each slot, according the rounding process in \eqref{eq:online_rounding}.
The BSs send the scheduling control information in a separate downlink control channel \cite{3gpp38211,3gpp38213}. This control information includes the integer slot assignments for each user, the slot number/periodicity, the power control commands to configure the transmission power as \eqref{eq:user_tx_power}. Then, the users can transmit their data packets in the following period accordingly.
The above signalling process is repeated over time.

Second, we discuss the time synchronization and signalling delay. The transmissions from users are required to be synchronized to align the slots. This can be achieved by first synchronizing the BSs using the precision time protocol (PTP) \cite{godor2020look}, which achieves sub-microsecond accuracy for a local area network \cite{gu2021knowledge}. The users are synchronized to the BSs using the downlink control channel, which can also achieve sub-microsecond accuracy \cite{3gpp38133}. Thus, sub-microsecond time synchronization among users' transmissions can be established, sufficient for aligning the slot duration ($0.1\sim1$ millisecond) in typical WTSN systems \cite{3gpp.38.825}. Furthermore, the signalling delay between the BSs and the controller can be minimized by running the controller at the edge of the network, where the signalling delay is typically less than a few milliseconds \cite{gu2021knowledge}, i.e., negligible compared to the solver's computing time.

\begin{figure}[!t]
\centering
\includegraphics[scale=0.75]{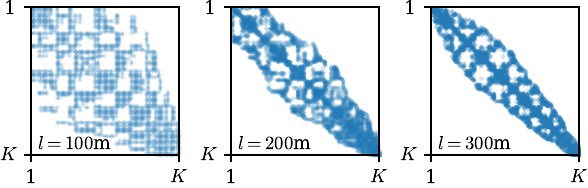}
\vspace{-0.1cm}
\caption{Sparsity patterns of non-zero elements in constraint coefficient matrices in a random realization of the network for different network sizes $l=100$, $200$ and $300$m when $\rho=0.0075$, i.e., $K=75$, $300$ and $675$, respectively. Here, users are permuted using reverse Cuthill McKee ordering \cite{cuthill1969reducing} to cluster non-zero elements around the diagonal.}
\label{fig:plot_matrix_sparsity}
\vspace{-0.4cm}
\end{figure}

\section{Simulation Results}\label{sec:simulation_results}
This section presents simulations evaluating our methods.
\subsection{Simulation Setups}\label{subsec:simulation_setup}
We assume that the WTSN system \cite{3gpp.38.825} is deployed in a $l$m~$\times$~$l$m cellular network centered at $(l/2,l/2)$ meters.
Each cell is a 20m~$\times$~20m square with one BS located at the center, i.e., there are $A = l^2/20^2$ cells (or BSs). 
The $(a_\mathrm{x},a_\mathrm{y})$-th cell is centered at $(20a_\mathrm{x}- 10,20a_\mathrm{y}-10)$, where $a_\mathrm{x}$ and $a_\mathrm{y}$ are both equal to $1,\dots,l/20$.
Unless specifically stated, we assume that users are static, and we set the number of users $K$ based on the user density $\rho$ (users per square meter) as $K=\rho l^2$.
Each user is randomly distributed in the simulated area, where each coordinate is randomly selected from the uniform distribution with an interval $[0,l]$ meters.
The slot duration of the devices is configured as $\Delta_\mathrm{0}=0.125$~millisecond. The bandwidth $B$ is set as $5$~MHz at a $4$~GHz carrier frequency. 
The packet size $L$ is 800 bits, and the maximum decoding error rate $\epsilon^{\max}$ for each user is $10^{-5}$ when no interference exists.
The additional transmission power offset is set as $\alpha=1$, and the noise power $\mathbb{N}_\mathrm{0}B $ is set as $-94$~dBm in decibel scale.
The path gains between a user and a BS follow a log-distance path loss model \cite{series2015propagation} as $-28\log_{10}(d+1)-20\log_{10}(f/(1\text{MHz}))+12$ in decibel scale, where $f$ is the carrier frequency in MHz and $d$ is the user-to-BS distance in meters.
In the simulations, all interference, including unmeasurable interference, is considered when evaluating the actual packet loss $\epsilon_k$ for each user $k$.
To reduce the impact of unmeasurable interference, we assume highly sensitive receivers at the BSs, i.e., the signal strength of a user at the BS is measurable if signal strength of the user at the BS is higher than $-10$dB of the noise power, i.e., $\gamma=0.1$. In practical systems, this extreme value of sensitivity is achievable by using advanced receiver designs \cite{ngo2011uplink}, while the typical sensitivity threshold may be higher \cite{3gpp38104}.
When multiple users associated with the same BS are assigned to the same slot, the BS decodes the user with the highest SINR, and the other users' packets will be discarded, i.e., their packet error rates are $1$.
We set $D=\lceil\beta (Z-1)\rceil$ where $\beta=2$ for each given $Z$ in the MMW, and the configuration of the update rate $\eta$ and the number of iterations $N$ will be discussed later. The MMW algorithm is implemented with SciPy sparse matrix library \cite{scipy} and runs in a virtual machine that is equipped with $6$ central processing unit cores at a 3GHz clock rate.

\subsection{Evaluation of Sparsity of Graphs}
Fig. \ref{fig:plot_graph_test} illustrates the number of users and constraints in simulated networks with different user densities. It shows that the number of users, $K$, and the number of constraints, $C$, grow at a close rate as the network size increases, validating the linearity between them in \eqref{eq:linearity_in_K_and_C}. The maximum number of neighbors for users, $\Omega$, grows at a decreased rate as the network size increases. This indicates the maximum number of neighbors for users converges due to the limited receiver sensitivity, i.e., the neighbors of a user are all located in a bounded area centered at each user's associated BS \cite{penrose2003random}.
The higher user density will lead to an increasing number of users and an increasing number of constraints as more neighboring users will associate with the same BS and interfere with the detection range of the BSs. In the rest of the evaluations, we fix user density as $\rho=7.5\times 10^{-3}$ users per square meter and vary the network size $l$ from $100$ to $300$ meters, i.e., the number of users $K$ varies from $75$ to $675$. Fig. \ref{fig:plot_matrix_sparsity} shows the sparsity patterns of the non-zero elements in the constraint coefficient matrices, i.e., the non-zero elements in interference graphs' adjacency matrices. It implies that most of the elements in these matrices are zero, and we can ignore the processing of those elements and the corresponding PSD matrix's elements to accelerate the solver.

\begin{figure*}[!t]
\centering
\subfloat[The convergence of the normalized duality gap, $\mathrm{gap}(\Bar{\mathbf{y}},\Bar{\mathbf{X}})$.]{\includegraphics[scale=0.75]{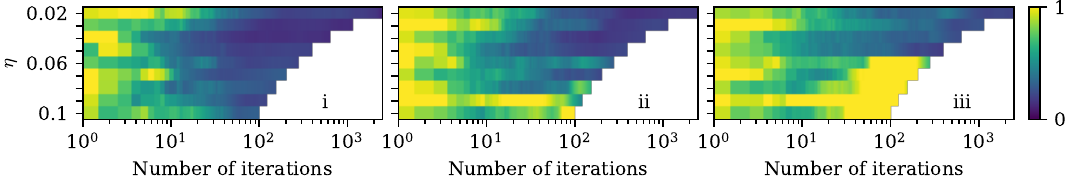}\label{fig:plot_duality_gap}}
\\
\subfloat[The convergence of the primal side of the gap, $\max_c \mathbf{A}^{(c)} \bullet \Bar{\mathbf{X}}$.]{\includegraphics[scale=0.75]{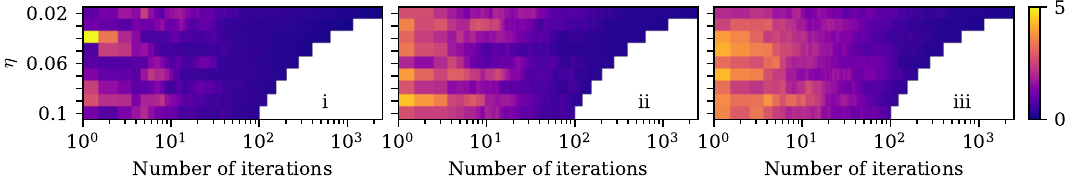}
\label{fig:plot_max_vio}}
\caption{The convergence of the duality gap in the MMW, $\mathrm{gap}(\Bar{\mathbf{y}},\Bar{\mathbf{X}})$, in \eqref{eq:definition:sdp_canonical_form_duality_gap} when i) $l = 100$m, ii) $l = 200$m and iii) $l = 300$m.}
\label{fig:plot_duality}
\vspace{-0.4cm}
\end{figure*}

\subsection{Baseline Methods}
We then explain the implementation of the baselines different from the proposed methods, including other SDP solvers, heuristics and linear relaxation\footnote{Machine learning and spectral graph methods are not directly applicable for the interference management problem formulated in this paper, as mentioned in Section \ref{subsec:related_works}, and we discuss how to apply them in the appendix.}.

\subsubsection{SDP Solvers (ADMM, RAND)}\label{subsec:sim_res:base_line:sdp_solvers}
We compare the PDIP and the ADMM solvers with the MMW solver in the framework. 
Note that we observe that the PDIP solver \cite{toh1999sdpt3,yamashita2003implementation} fails to scale to the problem with $\geq 100$ users for our simulated scenarios (this fact is also observed in \cite{shi2015largeA}) due to its high computational complexity, and it is not compared in simulations.
The ADMM solver \cite{o2016conic} is referred to as the ``ADMM'' scheme. Further, we compare the case where the SDP solver simply returns a random PSD matrix, referred to as the ``RAND'' scheme, to illustrate that the returned PSD matrices from the MMW and ADMM SDP solvers are meaningful.

\subsubsection{Heuristic Algorithms (MINTP, MASSO)}
We compare our methods with heuristics based on greedy max-weight scheduling policies \cite{tassiulas1990stability}. Specifically, users are first sorted based on a weight $w_k$ for each user $k$, $k=1,\dots,K$, as $k_1,\dots,k_K$, where $w_{k_1}\geq w_{k_2}\geq\dots \geq w_{k_K}$. We then try to assign users from $k_1$ to $k_K$ to the same slot. If adding the user to the slot does not cause a violation of interference constraints for the previous users and the user itself, the user is assigned to the slot and is marked as assigned; otherwise, the user remains unassigned, and we move to the next user. After the slot is tested with all users, we add a new slot and try to assign users to it in the same order as above, while those users assigned with a slot before are removed. The process is repeated until all users are assigned. We test the following weighting schemes: 1) weighting based on interference power to the user, i.e., $w_k=\sum_{k'\neq k} S_{k',k}$, referred to as the ``MINTP'' scheme and 2) based on the number of users associated with the same BS, $w_k=\sum_{k'\neq k} Q_{k',k}$, referred to as the ``MASSO'' scheme.

\subsubsection{Linear Relaxation Method (LP)}
Additionally, we compare our SDP relaxation approach with the linear relaxation one \cite{subramanian2008minimum}.
We relax the binary slot assignments to values $x_{k,z}\in [0,1]$ indicates how likely user $k$, $k=1,\dots,K$, is assigned to slot $z$, $z=1,\dots,Z$ for the given slot number $Z$. 
As a user is only assigned with one slot, i.e., $\sum_z x_{k,z} = 1$. Users associated with the same BS should be assigned to different slots, i.e., $x_{k,z}+x_{k',z}\leq 1$ $\forall z,\ \forall Q_{k,k'} = 1$.
The interference of a user should be less than the threshold in each slot, i.e., $\sum_{k'\neq k} S_{k',k}x_{k',z}\leq (1-x_{k,z}) \sum_{k'\neq k} S_{k',k}  + x_{k,z} \alpha $ $\forall k,z$. Here, if user $k$ is less likely assigned to slot $z$ (i.e., when $x_{k,z}$ is lower), the interference threshold of user $k$ is higher, e.g., to $\sum_{k'\neq k} S_{k',k}$ when $x_{k,z}=0$ and all interference can be tolerant in the slot, since it is not likely transmitting in this slot $z$; otherwise, the threshold is lower, e.g., to the maximum allowed interference $\alpha$ as \eqref{eq:const:max_interference_original} when $x_{k,z}=1$. We solve the above linear relaxation using ADMM \cite{o2016conic}, which is referred to as the ``LP'' scheme.

\begin{figure}[!t]
\centering
\subfloat[Convergence of the MMW with different $\rho$.]{\includegraphics[scale=0.775]{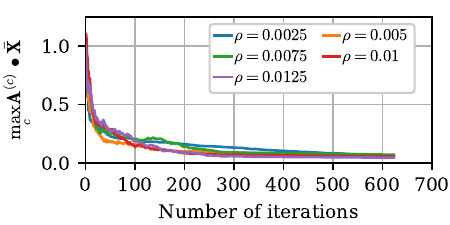}
\label{fig:plot_convergence_rho}}
\vspace{-0.3cm}
\\
\subfloat[Convergence of the MMW with different $\alpha$.]{\includegraphics[scale=0.775]{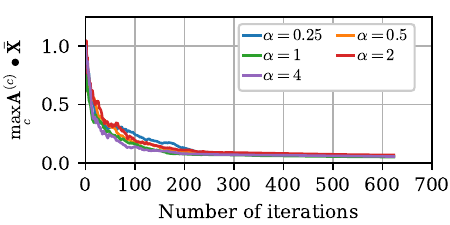}
\label{fig:plot_convergence_alp}}
\caption{
Convergence sensitivity of the MMW algorithm in maximum constraint violations when varying the sparsity level.}
\label{fig:plot_convergence}
\vspace{-0.4cm}
\end{figure}

\subsection{Convergence Analysis of MMW}\label{subsec:sim_res:convergence_analysis}

We evaluate the convergence of the MMW algorithm in Algorithm \ref{alg:mmw-im} for the given minimum number of slots. Specifically, we first run the framework with the MMW to find the minimum number of slots. 
Then, for the given minimum number of slots, we record the values of the dual gap in \eqref{eq:definition:sdp_canonical_form_duality_gap} for averaged primal and dual variables at different numbers of MMW iterations, i.e., $\mathrm{gap}(\Bar{\mathbf{y}},\Bar{\mathbf{X}})$.
We vary the step size of optimization variables, $\eta$, from $0.02$ to $0.1$ and set the number of iterations, $N$, as $1/\eta^2$ (ignoring the logarithm term in Theorem \ref{theorem:mmw_convergence} for simplicity).
Fig. \ref{fig:plot_duality_gap} shows the normalized values of the duality gap under different network sizes. Results indicate that when $\eta$ is low, e.g., $0.02\sim0.05$, the gap converges when the number of iterations is around $100\sim200$ for all tested network sizes. This implies the designed MMW can converge in a close number of iterations for different network sizes when $\eta$ is small, and we can set the same number of iterations for tested network sizes to solve the SDP. Results also show that the gap fails to converge for high $\eta$ in large networks with more users.

To better illustrate the convergence of the gap, Fig. \ref{fig:plot_max_vio} shows the primal side of the gap, $\max_c \mathbf{A}^{(c)} \bullet \Bar{\mathbf{X}}$, that indicates the maximum constraint violations. 
Results indicate that the maximum violations converge approximately at $100$ iterations. By comparing Fig. \ref{fig:plot_duality_gap} and Fig. \ref{fig:plot_max_vio}, we can identify that the dual side of the gap, $\lambda_{\min}(\sum_c \Bar{y}_c \cdot \mathbf{A}^{(c)})K$, fails to converge for high $\eta$ in large networks (for a concise presentation, the values of the dual side are not plotted). This is because $\mathrm{Softmax}(\cdot)$'s outputs (i.e., the dual variables) in \eqref{eq:mmw:vector_exponential} can hardly converge when it has a high dimension in large networks (due to a large number of constraints) and simultaneously has high update rates in its input (due to high $\eta$).
Nevertheless, without explicit statement, we set $\eta=0.04$ and $N=150$ in the remaining simulations since these values lead to the convergence of the gap in all tested network sizes, as shown in Fig. \ref{fig:plot_duality}.
We further evaluate the convergence of the MMW when the user density $\rho$ and transmission power offset $\alpha$ vary when $l=100$m, as shown in Fig. \ref{fig:plot_convergence_rho} and Fig. \ref{fig:plot_convergence_alp}, respectively. These parameters affect the number of interfering neighbors of each user, and therefore, it impacts the sparsity level of the interference graphs.
Results show that the MMW converges for different $\rho$ and $\alpha$ values at a similar number of iterations, i.e., $N=150$., which indicates the robustness of the MMW to different sparsity levels.

\subsection{Performance of Proposed SIG-SDP Framework}

\begin{figure}[!t]
\centering
\subfloat[The SIG-SDP framework with different $l$.]{\includegraphics[scale=0.75]{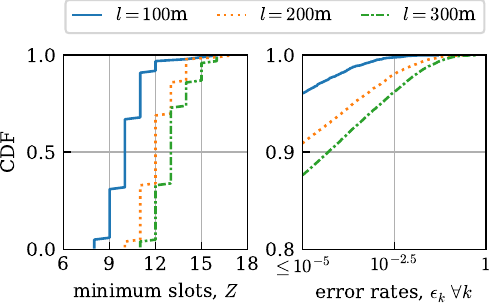}\label{fig:plot_data_slot_bler_mmw_different_cell_size}}
\vspace{-0.3cm}
\\
\subfloat[The framework using different SDP solvers.]{\includegraphics[scale=0.75]{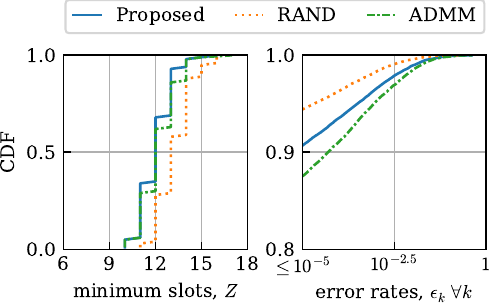}
\label{fig:plot_data_slot_bler_scs_mmw}}
\caption{CDFs of the minimum number of slots and the error rates when using the SIG-SDP framework.}
\label{fig:plot_framework_performance}
\vspace{-0.4cm}
\end{figure}

Fig. \ref{fig:plot_data_slot_bler_mmw_different_cell_size} shows the cumulative distribution functions (CDF) of the minimum number of slots and the packet error rates when using the proposed framework and the MMW algorithm as the SDP solver. We measure the CDFs in three network sizes, e.g., when $l = 100$m, $200$m and $300$m. Results show that the minimum number of slots increases as the network size increases due to growing numbers of neighboring users (as shown in Section \ref{subsec:simulation_setup}). Results also show that a proportion of users, approximately $5\%\sim10\%$, has a packet error rate violating the required threshold $10^{-5}$. This is because the algorithms cannot sense unmeasurable interference due to the limited receiver sensitivity, and the accumulation of unmeasurable interference causes lower SINRs than the required threshold. Furthermore, as the network size increases, the proportion of users with violated packet error rates becomes larger due to more users generating unmeasurable interference in larger networks.
Fig. \ref{fig:plot_data_slot_bler_scs_mmw} compares the CDFs when using the framework with the proposed MMW solver (with legend ``Proposed'') to the cases with ADMM and RAND solvers when $l=200$m.
Comparing the ADMM and the proposed MMW solvers, the proposed one uses slightly fewer slots ($\sim2\%$) and fewer users with violated packet error rates ($\sim3\%$). 
This is because the proposed MMW solver iterates the primal variables (the PSD matrix) in low ranks as \eqref{eq:mmw:matrix_exponential_vector_representaion}, which is a closer approximation to the integer slot assignment than the ones in ADMM without the rank restriction. 
Further, the RAND solver returning the random PSD matrix leads to $10\%$ more slots than the proposed one. This implies that the PSD matrix returned from the MMW contains the user correlation that can guide the efficient slot assignments when rounding. 
Meanwhile, the RAND solver has $\sim4\%$ fewer users with violated error rates than the proposed one, as unmeasurable interference becomes less in each slot when using more slots.

\begin{figure}[!t]
\centering
\subfloat[Comparison to RAND and LP schemes.]{\includegraphics[scale=0.75]{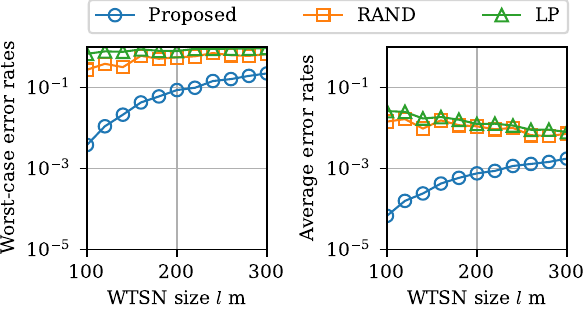}\label{fig:plot_data_bler_avg_max_mmw_lp_rand}}
\vspace{-0.3cm}
\\
\subfloat[Comparison to heuristics.]{\includegraphics[scale=0.75]{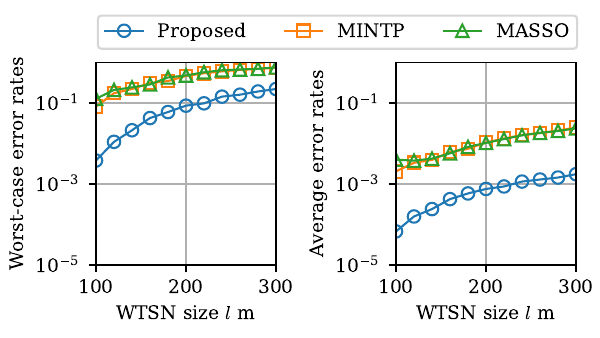}
\label{fig:plot_data_bler_avg_max_mmw_heuristic}}
\caption{User packet error rates when using different schemes.}
\label{fig:plot_data_bler_avg_max}
\vspace{-0.4cm}
\end{figure}

Fig. \ref{fig:plot_data_bler_avg_max} compares the user packet error rates of different schemes when the slot numbers in each scheme are the same, where we vary the network size from $l=100$m to $300$m. We first use the SIG-SDP framework to find the minimum number of slots and measure the error rates for the given slot number in different schemes.
Fig. \ref{fig:plot_data_bler_avg_max_mmw_lp_rand} compares the proposed MMW solver with the random SDP solver and the linear relaxation schemes, i.e., the RAND and LP schemes. Results show that the proposed solver achieves approximately $100$ times lower packet error rates when the network size is small. 
This is because the correlation between each pair of users' slot assignments is not optimized in the RAND and LP schemes, while our scheme optimizes the PSD matrix guiding the slot assignments when rounding. Results also show the difference between the proposed solver and the compared ones is smaller when the network sizes increase. This is due to the large unmeasurable interference that increases the error rates. 
Fig. \ref{fig:plot_data_bler_avg_max_mmw_heuristic} shows that the proposed MMW solver achieves $10$ times lower error rate than heuristics since heuristics do not exploit the correlation between users' slot assignments.

\subsection{Computational Overheads in SIG-SDP Framework}\label{subsec:sim_res:computing_time}
Next, we evaluate the computational overheads of the SIG-SDP framework. We first measure the real-world computing time of the MMW solver and then compare the computing time of the framework with the MMW solver to the other methods. Finally, we evaluate the impact of the computing time on the performance of the framework in dynamic networks when we deploy the framework in the online architecture.

\begin{figure}[!t]
\centering
\subfloat[Computing time of the MMW solver.]{\includegraphics[scale=0.75]{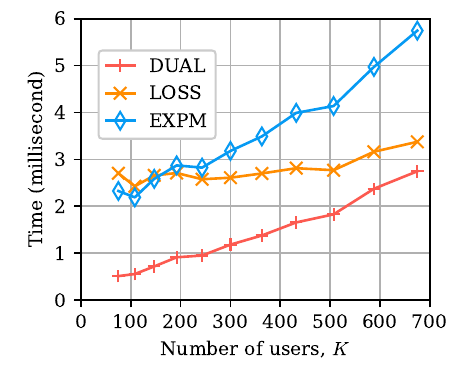}\label{fig:plot_data_mmw_time}}
\vspace{-0.3cm}
\\
\subfloat[Computing time of the framework.]{\includegraphics[scale=0.75]{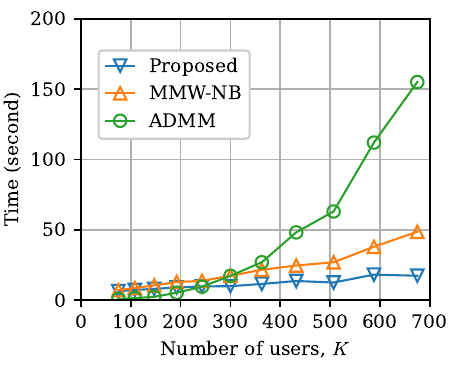}
\label{fig:plot_data_mmw_scs_iter_time}}
\caption{Computing time of the MMW solver, the SIG-SDP framework and other comparison schemes when $l$ varies from $100$m to $300$m, i.e., the number of users $K$ varies from $75$ to $675$.}
\label{fig:computing_time}
\vspace{-0.4cm}
\end{figure}

\subsubsection{Computational Overhead of MMW}
We first evaluate the computing time of the MMW solver given different network size from $l=100$m to $300$m, and consequently, the number of users $K$ varies from $75$ to $675$.
 Specifically, Fig. \ref{fig:plot_data_mmw_time} shows the computing time of each MMW iteration when updating the dual variables in \eqref{eq:mmw:vector_exponential} (with legend ``DUAL''), updating the loss matrix in \eqref{eq:mmw:loss_matrix} (with legend ``LOSS'') and approximating the matrix exponential \eqref{eq:mmw:matrix_exponential} in  \eqref{eq:mmw:expm_approximation}-\eqref{eq:mmw:matrix_exponential_vector_representaion} (with legend ``EXPM''). 
Here, the collected computing time is measured when the minimum number of slots is given, i.e., in the last iteration of the binary search in the SIG-SDP framework.
Results show that the above components cost computing time linearly scaling with the number of users, validating the complexity analysis of the MMW in Section \ref{subsec:mmw_complexity}.

\subsubsection{Computational Overhead Comparisons}
Moreover, with increasing network size and the number of users, Fig. \ref{fig:plot_data_mmw_scs_iter_time} shows the SIG-SDP framework's computing time when using the MMW algorithm as the solver, both within the derived slot bounds from Theorem \ref{theorem:min_slot_bounds} (with legend ``Proposed'') and without bounds by simply searching in $[1,K]$ (with legend ``MMW-NB''). Also, it compares the ADMM-based framework  (with legend ``ADMM'') using the derived bounds.
Results show that the MMW solver reduces the computing time of the framework by up to $10$ times in large networks compared to the classic ADMM solver. The MMW-based framework shows near-linear growth in computing time w.r.t. the number of users due to its near-linear complexity, leveraging the sparsity of interference graphs by ignoring non-interfering user pairs. In contrast, ADMM has polynomial complexity, resulting in significant computing time in large-scale networks. Additionally, using bounds on interference graphs' chromatic numbers reduces computing time by up to $50\%$ by limiting the search range closer to the minimum.

\subsubsection{Impact of Computational Overhead in Dynamic Networks}\label{subsubsec:computing_time_dynamic_networks}
Finally, we evaluate the impact of the computing time of the SIG-SDP framework on the performance in dynamic networks.
We use the online architecture to apply the SIG-SDP framework (with legend ``Proposed'') in dynamic networks with mobile users and compare the performance with the MINTP heuristic scheme (with legend ``Heuristic'') that has a very low computing time ($\sim40$ milliseconds) and the framework using the ADMM solver (with legend ``ADMM'') with a longer computing time.
We also simulate an ideal case where the framework costs no computing time (with legend ``Ideal'').
The network size is set as $l=200$m, and users are moving in a random direction, where the speed varies from $0.2$ to $1$ meters per second, e.g., at approximately the walking speeds of humans or robots. Note that when a user reaches the boundary of the simulated network, it will move in another random direction towards the inside of the network. Results in Fig. \ref{fig:plot_data_bler_online_methods} show that the proposed scheme achieves lower packet error rates than the heuristic scheme when users move slowly. 
However, its performance advantage diminishes as user speed increases. 
This is because the proposed scheme requires more time to determine the slot assignments than the heuristic scheme. Consequently, in high mobility networks, user interference states have changed substantially by the time the proposed scheme returns the assignments.
The difference between the realistic and the ideal cases shows that the framework's performance can be improved by further accelerating its computation. It also shows that the ADMM-based framework has a high error rate due to its long computing time, which is not suitable for dynamic networks.
Additionally, we compare the MMW-based framework with different number of iterations in Fig. \ref{fig:plot_data_bler_online_iterations}. Results show that when users move slowly, the framework with the small or large number of iterations achieves higher error rates, as these cases either return the slot assignments too early without sufficient optimization or return the slot assignments too late to adapt to the dynamic interference changes, respectively. When users move at a higher speed, the framework with a few iterations achieves lower error rates than the one with many iterations, as the latter cannot adapt to the rapid changes in interference states due to the long computing time.


\begin{figure}[!t]
\centering
\subfloat[Comparison with different methods.]{\includegraphics[scale=0.75]{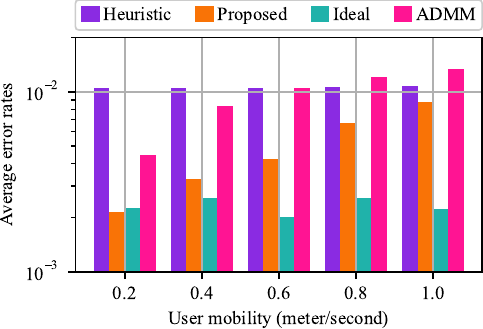}
\label{fig:plot_data_bler_online_methods}}
\vspace{-0.2cm}
\\
\subfloat[Comparison with different iterations.]{\includegraphics[scale=0.75]{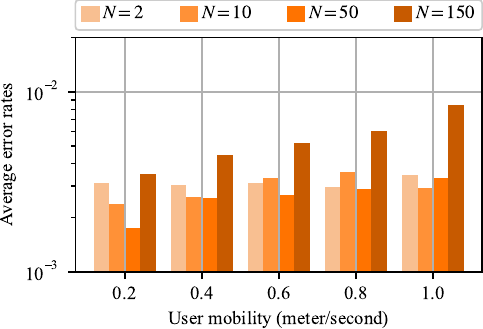}
\label{fig:plot_data_bler_online_iterations}}
\caption{
Average packet error rates in dynamic networks with mobile users.}
\label{fig:plot_data_bler_online}
\vspace{-0.4cm}
\end{figure}

\section{Conclusion}\label{sec:conclusion}
This paper proposed the SIG-SDP framework for slot assignments to manage interference in large-scale WTSN to reduce the number of slots used while ensuring high reliability.
We have shown how to exploit the sparsity of interference graphs to reduce the complexity of solving the problem, e.g., using the chromatic numbers of sparse graphs to bound the minimum number of slots, and using sparse random sketching to approximate the optimal PSD matrix.
Consequently, the framework has a near-linear complexity as the number of users increases, which is more scalable than the existing polynomial-complexity SDP methods.
However, we have shown that the framework still requires a larger computing time than simple heuristics when returning the slot assignments. Consequently, the out-performance of the proposed methods compared to the heuristics, e.g., the reliability improvement, vanishes when the interference in the network varies fast.

Future research could be conducted on deep learning and spectral graph methods that perform one-shot mapping from the graphs directly to the minimum number of slots and the optimal PSD matrix, reducing the computing time. For example, observe that the iterated PSD matrix in the MMW is a matrix exponential of a linear combination of the constraint coefficient matrices.
Thus, neural networks can be trained to infer a near-optimal linear combination (dual variables).
Furthermore, we can treat the combined constraint coefficient matrices as a graph adjacency matrix and use spectral graph methods to approximate the gram matrix of the optimal PSD matrix.
As the trained neural networks and spectral graph methods returns with a single round of computation, they can further reduce the computing time. However, the exact design of this integration is still an open question.
Additionally, this work considered a channel model without fading, and path gains can be deterministically measured. When considering fading channels, the edges of graphs could be stochastic and hardly be measured explicitly. In this case, how to construct stochastic graphs and exploit their sparsity will be a challenge.
Moreover, the current online architecture waits the whole framework's process to be completed before returning the solution. Further study is needed on the design on the numerically stable and convergent algorithm with timely tracking on dynamic interference while concurrently computing the solution.
Finally, the current design of the SIG-SDP framework does not consider adaptive transmission power control. Due to the impact of unmeasurable interference as observed in the simulations, decreasing powers of users that are not heavily interfered can further reduce interference and improve reliability, particularly when the network size is large with many users. Integrating the MMW-based SDP framework for slot assignments with power allocation schemes to manage unmeasurable interference requires further research.

\bibliography{main}
\bibliographystyle{IEEEtran}

\begin{IEEEbiography}
[{\includegraphics[width=1in,height=1.25in,clip,keepaspectratio]{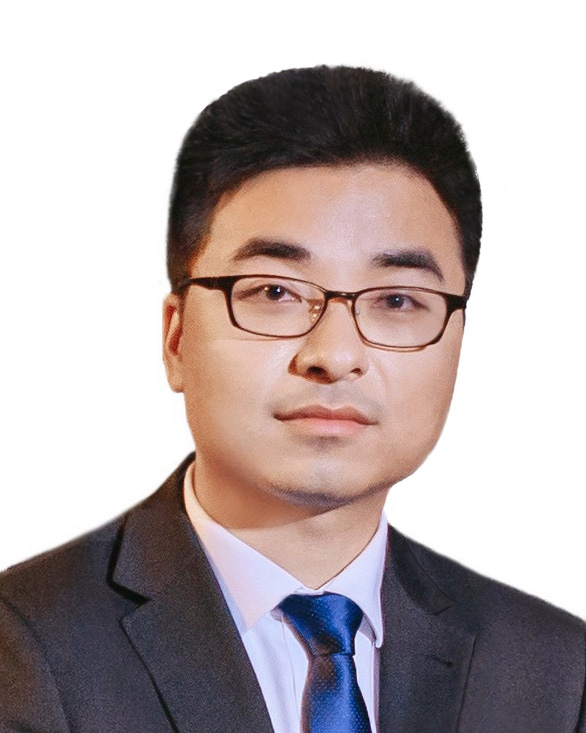}}]{Zhouyou Gu} completed his Ph.D. degree in 2023 at the School of Electrical and Information Engineering, the University of Sydney (USYD), Australia.
He was a research fellow at Deakin University, Australia in 2024. He is currently a research fellow at Singapore University of Technology and Design (SUTD), Singapore.
He was a recipient of the Australian Research Training Program Stipend, as well as the USYD Postgraduate Research Supplementary and Completion Scholarships, the NVIDIA Academic Grant Program, and the Best Student Paper Award at ML4Wireless@AAAI 2026.
His research focuses on making wireless networks more intelligent at scale through graph learning, reinforcement learning, and AI-native protocol design.
\end{IEEEbiography}

\begin{IEEEbiography}
[{\includegraphics[width=1in,height=1.25in,clip,keepaspectratio]{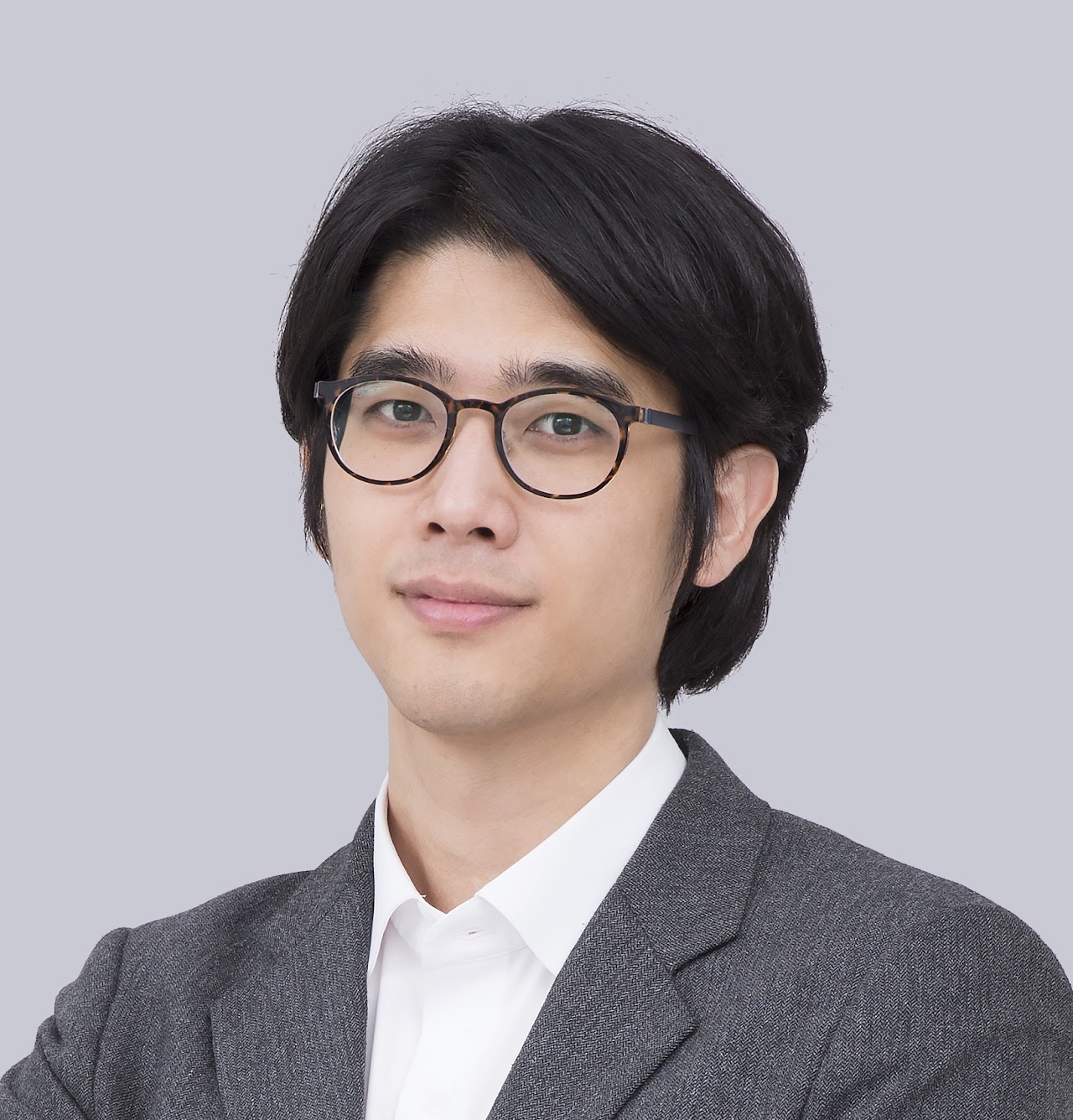}}]{Jihong Park} is an Associate Professor at the Singapore University of Technology and Design (SUTD) and an Honorary Associate Professor at Deakin University. He is Director of the MediaTek-SUTD Joint Laboratory and Deputy Director of the Future Communications Research and Development Programme (FCP) in Singapore. Dr. Park received his B.S. and Ph.D. degrees from Yonsei University, Seoul, South Korea, in 2009 and 2016, respectively. He was a Post-Doctoral Researcher with Aalborg University, Denmark, from 2016 to 2017, and the University of Oulu, Finland, from 2018 to 2019. His current research focus includes AI-native semantic communication and distributed machine learning for 6G. Dr. Park has served as Track Chair and Workshop Chair at leading conferences, including IEEE WCNC 2026, IEEE GLOBECOM 2023, and the 2025 ICML Workshop on Machine Learning for Wireless Communication and Networks. He has received several prestigious awards, including the 2023 IEEE Communication Society Heinrich Hertz Award and the 2022 FL-IJCAI Best Paper Award. Currently, Dr. Park is an Editor of IEEE Transactions on Communications and a Member of IEEE Signal Processing Society Machine Learning for Signal Processing Technical Committee, and Vice Chair of the AI-RAN Alliance’s AI-on-RAN Working Group.
\end{IEEEbiography}

\begin{IEEEbiography}
[{\includegraphics[width=1in,height=1.25in,clip,keepaspectratio]{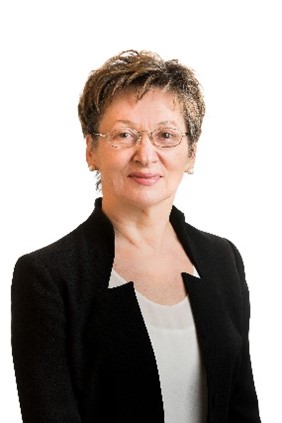}}]{Branka Vucetic} is an Australian Research Council Laureate Fellow and Director of the Centre of Excellence for IoT and Telecommunications at the University of Sydney. Her current research work is in wireless networks and the Internet of Things. In the area of wireless networks, she works on communication system design for millimetre wave frequency bands. In the area of the Internet of Things, Vucetic works on providing wireless connectivity for mission critical applications. Branka Vucetic is a Fellow of IEEE, the Australian Academy of Technological Sciences and Engineering and the Australian Academy of Science.
\end{IEEEbiography}

\begin{IEEEbiography}
[{\includegraphics[width=1in,height=1.25in,clip,keepaspectratio]{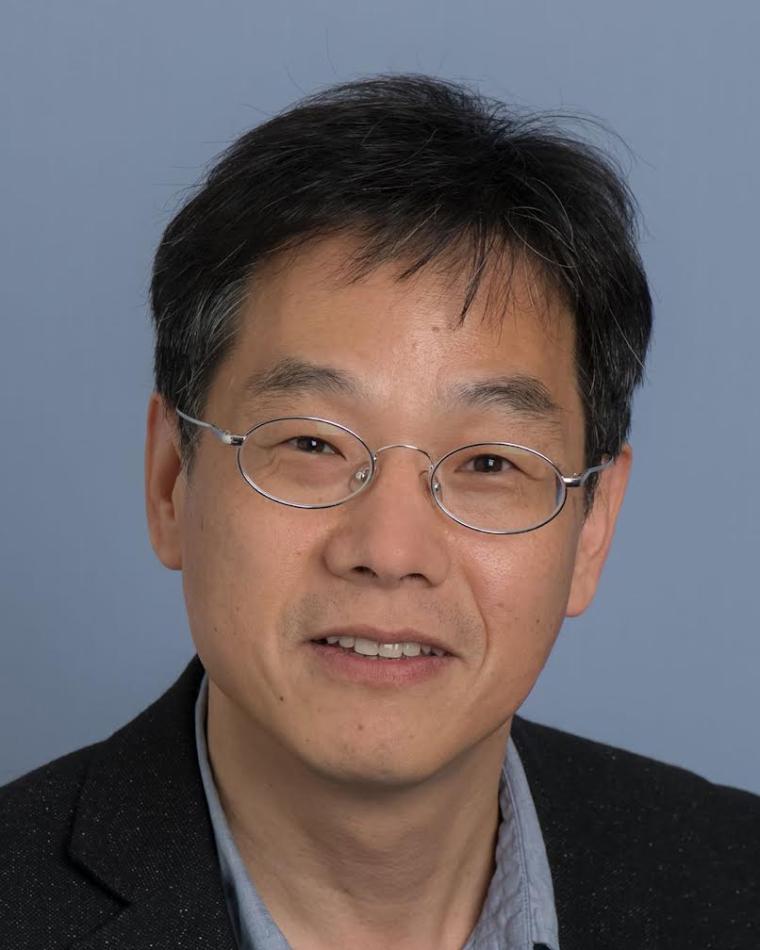}}]{Jinho Choi} was born in Seoul, Korea. He received B.E. (magna cum laude) degree in electronics engineering in 1989 from Sogang University, Seoul, and M.S.E. and Ph.D. degrees in electrical engineering from Korea Advanced Institute of Science and Technology (KAIST) in 1991 and 1994, respectively. He is with the School of Electrical and Mechanical Engineering, the University of Adelaide, Australia, as a Professor. His research interests include the Internet of Things (IoT), wireless communications, and statistical signal processing. He authored two books published by Cambridge University Press in 2006 and 2010 and one book by Wiley-IEEE in 2022. Prof. Choi received a number of best paper awards including the 1999 Best Paper Award for Signal Processing from EURASIP. He is a Fellow of the IEEE and has been on the list of World’s Top 2\% Scientists by Stanford University since 2020. Currently, he is a Senior Editor of IEEE Wireless Communications Letters and an Associate Editor of IEEE Trans. Mobile Computing. He has also served as a Division Editor of Journal of Communications and Networks (JCN), an Associate Editor or Editor of other journals including IEEE Trans. Communications, IEEE Communications Letters, JCN, IEEE Trans. Vehicular Technology, and ETRI journal.
\end{IEEEbiography}

\end{document}


\maketitle

\appendices

\section*{Appendix: The Proof of Theorems \ref{theorem:min_slot_bounds}}
We first prove the upper bound. We consider the greedy coloring algorithm \cite{west2001introduction} that colors the interference-power graph $\mathcal{G}^{\mathrm{intp}}$. The algorithm lists the users in an arbitrary order and assigns each user to a color that exists in the graph but is not occupied by its neighbors (either in or out neighbor). 
When the existing colors in the graph are all occupied by a user's neighbors, the algorithm adds a new color. 
Greedy coloring leads to a color scheme with at most $\Omega+1$ colors on $\mathcal{G}^{\mathrm{intp}}$ \cite{west2001introduction}. 
Furthermore, this color scheme leads to the fact that all neighbors in the association graph $\mathcal{G}^{\mathrm{asso}}$ have different colors. 
This is because each neighbor of a user in $\mathcal{G}^{\mathrm{asso}}$ is also a neighbor in $\mathcal{G}^{\mathrm{intp}}$. By setting the number of used colors and the color assignments in the greedy coloring algorithm as the slot number and the slot assignments, we obtain an interference-free slot assignment scheme that is a feasible solution to the problem in \eqref{eq:prob:im_problem_original}. As the minimum number of slots $Z^*$ are less than or equal to any feasible $Z$ and $Z\leq \Omega+1$, $\Omega+1$ is an upper bound of $Z^*$. We then prove the lower bound in Theorem \ref{theorem:min_slot_bounds}. To achieve this, we consider the graph coloring of $\mathcal{G}^{\mathrm{asso}}$. Because of the constraint in \eqref{eq:const:same_bs_diff_slot_original} (i.e., users associated with the same BS should be assigned to different slots.), feasible slot assignments in \eqref{eq:prob:im_problem_original} can be mapped to a graph coloring scheme of $\mathcal{G}^{\mathrm{asso}}$ that all neighbors have different colors. Using the Hoffman bound \cite{hoffman2003eigenvalues}, the minimum number of colors (or slots) needed is lower bounded by $1-\lambda_{\max}(\mathbf{Q})/\lambda_{\min}(\mathbf{Q})$, proving the lower bound.

\section*{Appendix: The Proof of Theorem \ref{theorem:duality_of_the_canonical_form}}
Since $\mathcal{X}$ and $\mathcal{Y}$ are compact and convex and $\sum_c y_c \cdot \mathbf{A}^{(c)} \bullet \mathbf{X}$ is linear in $\mathbf{y}$ and $\mathbf{X}$, the minmax theorem states $\mathcal{J}=\max_{\mathbf{y}\in \mathcal{Y}} \min_{\mathbf{X}\in \mathcal{X}}\sum_c y_c \cdot \mathbf{A}^{(c)} \bullet \mathbf{X}$.
Thus, 1) $\mathcal{J} = \min_{\mathbf{X}\in \mathcal{X}} \max_c \mathbf{A}^{(c)} \bullet \mathbf{X}$ due to the fact that $|\mathbf{y}|=1$; 2) $\mathcal{J}=\max_{\mathbf{y}\in \mathcal{Y}} \lambda_{\min}(\sum_c y_c \cdot \mathbf{A}^{(c)})K$ due to the minimum eigenvalue problem.
From 1) and 2), we have $\lambda_{\min}(\sum_c y_c \cdot \mathbf{A}^{(c)})K\leq\mathcal{J}\leq\max_c \mathbf{A}^{(c)} \bullet \mathbf{X}$, for all $\mathbf{X}\in \mathcal{X}$ and $\mathbf{y}\in \mathcal{Y}$, i.e., the duality gap in \eqref{eq:definition:sdp_canonical_form_duality_gap} is always non-negative.
Additionally, let $\mathbf{X}^* = \arg\min_{\mathbf{X}\in \mathcal{X}} \max_c \mathbf{A}^{(c)} \bullet \mathbf{X}$ and $\mathbf{y}^*= \arg \max_{\mathbf{y}\in \mathcal{Y}} \lambda_{\min}(\sum_c y_c \cdot \mathbf{A}^{(c)})K$. Then, $\max_c \mathbf{A}^{(c)} \bullet \mathbf{X}^* =\lambda_{\min}(\sum_c y^*_c \cdot \mathbf{A}^{(c)})K=\mathcal{J}$, which proves Theorem \ref{theorem:duality_of_the_canonical_form}.

\section*{Appendix: Canonical Form of the SDP CSP}\label{subsec:format_canonical_form_sdp}
Formatting the canonical form of the SDP CSP in \eqref{eq:prob:csp_X_relaxed} has two steps: 1) we construct the symmetric coefficient matrices, and 2) the matrices are normalized to a spectral norm $1$.

\subsubsection*{Constructing Coefficient Matrices}
We first rewrite the constraint in \eqref{eq:const:X_psd_diag1} as
\begin{equation}
\begin{aligned}
\mathbf{X} \succeq 0 , \ X_{k,k} = 1 ,\ \forall k \ \Leftrightarrow\mathbf{X}\in\mathcal{X}, \  X_{k,k} \leq 1 ,\ \forall k.
\end{aligned}
\end{equation}
Thus, we obtain the domain of $\mathbf{X}$, $\mathcal{X}$, as \eqref{eq:definition:sdp_canonical_form_primal}. Here, we need to rewrite the inequalities further, $X_{k,k} \leq 1$, $\forall k$, in a form of $\mathbf{A}'\bullet\mathbf{X}\leq b$. By defining the following symmetric matrix,
\begin{equation}
\begin{aligned}
\mathbf{D}^{(k)} \triangleq [D^{(k)}_{i,j}\big|D^{(k)}_{k,k} = 1;D^{(k)}_{i,j} =  0,\ \forall (i,j)\neq(k,k) ],\ \forall k,
\end{aligned}
\end{equation}
$X_{k,k} \leq 1$ can be rewritten using this matrix as 
\begin{equation}\label{eq:const:ax<b:normalized_vector}
\begin{aligned}
\mathbf{D}^{(k)} \bullet \mathbf{X} \leq 1 , \ \forall k .
\end{aligned}
\end{equation}
Next, note that $X_{k,k'}$ and $X_{k',k}$ $\forall k\neq k'$ both describe how likely users $k$ and $k'$ should be assigned in the same slot. 
Thus, the component $aX_{k,k'}$ in the constraints can be rewritten into two components as $\frac{a}{2}X_{k,k'} + \frac{a}{2}X_{k',k}$, where $a$ is a coefficient on $X_{k,k'}$ in a constraint. Based on this fact, we rewrite the coefficients in the constraint \eqref{eq:const:same_bs_diff_slot_X} using a symmetric matrix as
\begin{equation}
\begin{aligned}
\mathbf{F}^{(e)} & \triangleq[F^{(e)}_{i,j}| F^{(e)}_{k_e,k'_e} = F^{(e)}_{k'_e,k_e}= \frac{1}{2}; \\
& F^{(e)}_{i,j}= 0,\forall \{i,j\}\neq\{k'_e,k_e\}] , \forall e, \{k'_e,k_e\} \in \mathcal{E}^{\mathrm{asso}} ,
\end{aligned}
\end{equation}
where $\{k'_e,k_e\}$ is the $e$-th edge in $\mathcal{E}^{\mathrm{asso}}$, i.e., $Q_{k'_e,k_e} = 1$. Then, \eqref{eq:const:same_bs_diff_slot_X} is rewritten as
\begin{equation}\label{eq:const:ax<b:association}
\begin{aligned}
\mathbf{F}^{(e)} \bullet \mathbf{X} \leq -\frac{1}{Z-1},\ \forall \{k'_e,k_e\} \in \mathcal{E}^{\mathrm{asso}}  .
\end{aligned}
\end{equation}
Similarly, for the constraint \eqref{eq:const:max_interference_X}, we define the symmetric matrix that rewrites the coefficients as
\begin{equation}
\begin{aligned}
\mathbf{H}^{(k)} &\triangleq[\mathbf{H}^{(k)}_{i,j}| H^{(k)}_{k',k} = H^{(k)}_{k,k'} = \frac{Z-1}{2Z} S_{k',k}, \forall k'\neq k; \\
& H^{(k)}_{k,k} = 0; H^{(k)}_{i,j} =  0, \forall (i,j) , i\neq k \lor j\neq k  ] , \forall k  , 
\end{aligned}
\end{equation}
which further rewrites the constraint as
\begin{equation}\label{eq:const:ax<b:interference_power}
\begin{aligned}
\mathbf{H}^{(k)} \bullet \mathbf{X} \leq \alpha - \frac{1}{Z}\sum_{k'\neq k} S_{k',k},\ \forall k .
\end{aligned}
\end{equation}

\subsubsection*{Normalizing Coefficient Matrices}
The above constraints in \eqref{eq:const:ax<b:normalized_vector}, \eqref{eq:const:ax<b:association} and \eqref{eq:const:ax<b:interference_power} are formatted in a form $\mathbf{A}'\bullet\mathbf{X}\leq b$ for a matrix $\mathbf{A}'$ and a constant $b$. Note that we aim to format the constraints in the canonical form as $\mathbf{A}\bullet\mathbf{X}\leq 0$, where $\mathbf{A}$ has norm $1$ and the right-hand side is $0$.
To achieve this, we can rewrite $b$ as $b=b/K\cdot\mathbb{I}^{K}\bullet\mathbf{X}$ since the trace of $\mathbf{X}\in\mathcal{X}$ is $K$. Then, we move $b/K\cdot\mathbb{I}^{K}\bullet\mathbf{X}$ to the left-hand side of the constraint and format it as $(\mathbf{A}'-b/K\cdot\mathbb{I}^{K})\bullet\mathbf{X}\leq 0$, where $(\mathbf{A}'-b/K\cdot\mathbb{I}^{K})$ is the coefficient matrices \cite{steurer2010fast}. Finally, by normalizing the matrices to set their spectral norm as $1$, we obtain the constraints in the canonical form.
Specifically, the canonical form of the relaxed SDP CSP in
\eqref{eq:prob:csp_X_relaxed} has coefficient matrices as
\begin{equation}\label{eq:canonical_form_matrices}
\begin{aligned}
&\mathbf{A}^{(k)} = \nsl(\mathbf{D}^{(k)} - \frac{1}{K}\mathbb{I}^{K}), k=1,\dots,K, \\
&\mathbf{A}^{(K+e)} = \nsl(\mathbf{F}^{(e)} + \frac{1}{K(Z-1)}\mathbb{I}^{K}), e=1,\dots,|\mathcal{E}^{\mathrm{asso}}|, \\
&\mathbf{A}^{(K+|\mathcal{E}^{\mathrm{asso}}|+k)} = \nsl( \mathbf{H}^{(k)} - ( \frac{1}{K}\alpha - \frac{1}{KZ}\sum_{k'\neq k} S_{k',k} ) \mathbb{I}^{K}), \\ &\qquad\qquad\qquad k=1,\dots,K,
\end{aligned}
\end{equation}
where $\nsl(\cdot)$ is the normalization function to divide the input matrix with its spectral norm, i.e., $\nsl(\mathbf{A}')=\mathbf{A}'/\|\mathbf{A}\|$. 
The norms of the above matrices in the inputs of $\nsl(\cdot)$ are
\begin{equation}\label{eq:norms_of_canonical_form_matrices}
\begin{aligned}
&\|\mathbf{D}^{(k)} - \frac{1}{K}\mathbb{I}^{K}\| = 1-\frac{1}{K} ,  \ \forall k ,\\
&\|\mathbf{F}^{(e)} + \frac{1}{K(Z-1)}\mathbb{I}^{K}\| = \frac{1}{K(Z-1)}+\frac{1}{2}, \ \forall e ,\\
&\| \mathbf{H}^{(k)} - ( \frac{1}{K}\alpha - \frac{1}{KZ}\sum_{k'\neq k} S_{k',k} ) \mathbb{I}^{K} \| \\
&\ = |\frac{1}{K}\alpha - \frac{1}{KZ}\sum_{k'\neq k} S_{k',k}| + \frac{Z-1}{2Z} \big(\sum_{k'\neq k} ( S_{k',k})^2  \big)^{\frac{1}{2}}, \forall k  .
\end{aligned}
\end{equation}
These expressions are derived by a symbolic mathematics tool \cite{eigenvalues} that computes the eigenvalues of the matrices.
The dual problem is directly formatted using the above matrices in \eqref{eq:canonical_form_matrices}.

\section*{Appendix: The Proof of Theorem \ref{theorem:mmw_convergence}}

Using the regret bound of the general matrix multiplicative weights \cite{arora2007combinatorial}, we have the upper bound of the losses as
\begin{equation}
\begin{aligned}
\sum_{n=1}^{N} \mathbf{L}^{[n]} \bullet \mathbf{X}^{[n]} \leq  \lambda_{min} (\sum_{n=1}^{N} \mathbf{L}^{[n]} ) K  + \eta N K + \frac{\ln K}{\eta} K .  
\end{aligned}
\end{equation}
On the other hand, each constraint's violation, $\mathbf{A}^{c}\bullet\mathbf{X}^{[n]}$, $\forall c,n$, is bounded in $[-K,K]$ since the norm of $\mathbf{A}^{c}$ is $1$ and $\mathbf{X}^{[n]}\in \mathcal{X}$. 
Thus, we can have the lower bound of the losses by applying the hedge rule's regret bound \cite{freund1997decision} as
\begin{equation}
\begin{aligned}
\sum_{n=1}^{N} \mathbf{L}^{[n]} \bullet \mathbf{X}^{[n]} \geq  \max_c \sum_{n=1}^{N} \mathbf{A}^{(c)}\bullet\mathbf{X}^{[n]} - \eta N K - \frac{\ln C}{\eta} K,
\end{aligned}
\end{equation}
where each constraint's violation in each turn is the reward of an expert in the hedge rule. 
By combining the lower and upper bounds, we have
\begin{equation}
\begin{aligned}
&\max_c \sum_{n=1}^{N} \mathbf{A}^{(c)}\bullet\mathbf{X}^{[n]} - \eta N K - \frac{\ln C}{\eta} K \\
&\leq  \lambda_{min} (\sum_{n=1}^{N} \mathbf{L}^{[n]} ) K  + \eta N K + \frac{\ln K}{\eta} K.  
\end{aligned}
\end{equation}
Note that $\sum_{n=1}^{N} \mathbf{L}^{[n]}/N = \sum_c \Bar{y}_c \mathbf{A}^{(c)}$ and  $\max_c \sum_{n=1}^{N} \mathbf{A}^{(c)}\bullet\mathbf{X}^{[n]}/N = \max_c\mathbf{A}^{(c)}\bullet \Bar{\mathbf{X}}$.
By setting $N=\frac{1}{\eta^2}(\ln K + \ln C)$ and substituting the duality gap in the above, we have
\begin{equation}
\begin{aligned}
\mathrm{gap}(\Bar{\mathbf{y}},\Bar{\mathbf{X}}) 
\leq 2 \eta K + \frac{\ln K+\ln C}{N\eta} K = 3 \eta K ,
\end{aligned}
\end{equation}
which proves the statement.

\section*{Appendix: Applying Learning and Spectral Methods for Interference Management}
Directly applying the learning methods, e.g., using GNNs, can hardly return meaningful slot assignments in graph cut/coloring like interference management problems according to the justification in \cite{loukas2019what,xu*2018how,gu2024graph} and the observations in \cite{gu2024graph}.
This is because the GNNs will aggregate the interference information on the neighboring users in their internal processing. This feature losses the user-pair-wise interference information, i.e., the interference graph structure, and therefore, the GNNs cannot differentiate and separate the interfering user pairs. 
One way to address this issue is to use the GNNs to learn the interference graph structure and then apply the learned structure to the graph cut/coloring \cite{gu2024graph}. Nevertheless, this method \cite{gu2024graph} requires an efficient graph cut/coloring scheme for large networks, such as the one designed in this work.

Using spectral graph methods, a graph cut problem can be approximated as an eigenvalue problem \cite{zha2001spectral,vandam2016new}. Specifically, let $\mathbf{v}_k$ be the relaxed unit vector for user $k$, as mentioned in Section \ref{sec:sdp_relaxation_framework}. Considering an undirected interference graph with symmetric adjacency matrix $\mathbf{A}$, the Laplacian matrix is defined as $\mathbf{L} = \mathbf{D} - \mathbf{A}$, where $\mathbf{D}$ is the diagonal degree matrix with $D_{k,k} = \sum_{k'\neq k} A_{k,k'}$. The sum of removed edge values in the cut is proportional to as the quadratic function $\mathbf{L}  \bullet (\mathbf{V}\mathbf{V}^{\rm T}) =\Tr(\mathbf{L} \mathbf{V} \mathbf{V}^{\rm T}) =\Tr(\mathbf{V}^{\rm T} \mathbf{L} \mathbf{V})$, where $\mathbf{V} = [\mathbf{v}_1,\dots,\mathbf{v}_K]^{\rm T}$ is the matrix collecting the relaxed slot assignment vectors of all users. By applying a proper re-parameterizing, $\mathbf{V}$ has $Z$ orthonormal columns \cite{zha2001spectral,vandam2016new}, i.e., $\mathbf{V}^{\rm T} \mathbf{V} = \mathbb{I}^Z$.
With this constraint, the optimal $\mathbf{V}$'s columns maximizing $\Tr(\mathbf{V}^{\rm T} \mathbf{L} \mathbf{V})$ are the $Z$ largest eigenvectors of $\mathbf{L}$ \cite{fan1949theorem}. Then, by normalizing $\mathbf{V}$ in rows, we can obtain the relaxed unit vectors $\mathbf{v}_k$ for each user $k$.
Note that this method's solution has the same form as the gram matrix of the PSD matrix and is rounded to the integer slot assignments $\mathbf{z}$ as discussed in Section \ref{sec:sdp_relaxation_framework}, i.e., it can replace the SDP solver in the framework.
However, it works on a single graph structure and integrating the interference constraint coefficient matrices into the graph structure is not straightforward.

\bibliography{main}
\bibliographystyle{IEEEtran}